\numberwithin{equation}{section}
\DeclareMathOperator*{\argmin}{arg\, min}
\newcommand{\normmm}[1]{{\left\vert\kern-0.25ex\left\vert\kern-0.25ex\left\vert #1
    \right\vert\kern-0.25ex\right\vert\kern-0.25ex\right\vert}}
\begin{document}
\newtheorem{Proposition}{Proposition}
\newtheorem{Lemma}[Proposition]{Lemma}
\newtheorem{Corollary}[Proposition]{Corollary}
\newtheorem{Definition}[Proposition]{Definition}
\newtheorem{Theorem}[Proposition]{Theorem}
\newtheorem{Remark}{Remark}

\captionsetup{font={small}}

{\small
\begin{flushleft}

\end{flushleft}
}

\vspace{8mm}

\title[SAMPLING METHOD FOR ACOUSTIC INVERSE SOURCE PROBLEMS]{Reconstruction of acoustic sources from the initial arrival time of waves}
\author{Qiuyi Li$^1$,Bo Chen$^{1,*}$,Peng Gao$^{1}$,Yu Sun$^{1}$ and Yao Sun$^{1,*}$}


\subjclass[2010]{35L05,  65M32,  35A02}
\keywords{time domain; inverse source problem; wave equation; sampling method}
\thanks{$1.$ College of Science, Civil Aviation University of China, Tianjin, People’s Republic of China}
\thanks{Corresponding author: Bo Chen,  charliecb@163.com; Yao Sun, sunyao10@mails.jlu.edu.cn}
\thanks{Submitted August 25,  2025.}


\begin{abstract}
In this paper, a novel time domain sampling method based on the initial arrival time of waves is proposed to reconstruct acoustic sources, including point sources, curve sources, surface sources and block sources. The uniqueness of reconstructing sources whose spatial support is a convex region is proved. Theoretical analyses are provided to demonstrate the validity of the proposed sampling method in reconstructing various types of sound sources. The proposed algorithm does not involve the time integral, exhibits high computational efficiency, and demonstrates strong noise resistance. Numerical experiments are conducted to show the effectiveness of the proposed method. 
\end{abstract}

\maketitle

\section{Introduction}

The time-domain inverse acoustic source problem mainly studies how to reconstruct the location, temporal characteristics, or spatial distribution of the sound source through the observed time-domain acoustic wave data. It has extensive applications in multiple fields, including geological exploration, medical imaging, and sonar positioning.

Recent years have witnessed continuous advances in acoustic inverse source problems, accompanied by a series of research achievements. Broadly speaking, these studies can be classified into frequency-domain problems and time-domain problems\cite{sayas2016,Chen2017,Chen2021}. The solution methods include direct inversion methods and iterative inversion methods, such as sampling methods, regularization techniques, and optimization approaches. The fundamental theory for inverse source problems was provided in \cite{kirsch2011, colton2013, isakov1990}.

Extensive research has been conducted on frequency-domain inverse source problems. The Cauchy problem for the Laplace equation was examined in \cite{Sun2017}. The inversion of point sources of the Helmholtz equation was investigated in \cite{Bao2021}. A direct sampling method was employed in \cite{Zhang2019} to invert multiple multipole sources. In \cite{alves2009}, the Newton iteration method was applied to solve the time-harmonic acoustic inverse source problem. An algebraic approach was utilized in \cite{Badia2011} to investigate the inverse source problem for the Helmholtz equation. Uniqueness analyses related to the Helmholtz equation with phaseless data are presented in \cite{zhang2018a, zhang2018b}. In recent years, solving inverse source problems with multi-frequency data has received extensive attention. In \cite{Bao2010}, multi-frequency data were used to reconstruct point sources, and rigorous stability estimates were established. An iterative algorithm was applied in \cite{Bao2015} to solve the multi-frequency inverse source problem. Decomposition methods were adopted in \cite{Guo2023, Ma2024} to resolve the inverse moving point source problem using multi-frequency data. The multi-frequency factorization method for imaging the support of a wave-number-dependent source function is investigated in \cite{Guo2024}.

Compared to frequency domain problems, time domain problems are more complex in calculation and analysis. However, for certain problems, the time domain data can better illustrate the physical process. Therefore, time-domain inverse source problems have also received extensive attention in recent years. For fixed source inversion, Fourier methods were successfully applied to time-domain inverse source problems in \cite{Triki2024}. The inversion of unknown sources of the form $\delta(t)g(x)$, which are delta-like in time and exhibit finite oscillation in space, was investigated in \cite{Hoop2015}. In \cite{Ton2003}, matched-filter imaging methods were implemented for unknown sources expressed as $q(t)\delta(\partial G)$, characterized by temporal oscillation and delta-like behavior on the domain boundary. Conditional stability estimates for wave equations associated with inverse source problems along straight lines were rigorously established in \cite{Cheng2002, Cheng2005}. The inverse source problem in elastodynamics was systematically examined in \cite{Bao2018},  where a Landweber iterative algorithm was developed for spatial function recovery, complemented by a non-iterative inversion scheme based on uniqueness proofs for temporal function reconstruction. Regarding moving point sources, the reconstruction of emitter trajectories was thoroughly investigated in \cite{Wang2017,Chen2020}. In \cite{Garnier2015}, a detailed analysis of moving point sources when their velocity approaches the wave propagation speed was provided. Both uniqueness and Lipschitz stability estimates for their inversion were theoretically proven in \cite{AlJebawy2022}. A gesture-based electromagnetic wave input technique was analytically examined in \cite{Li2018}. For co-inversion problems, a combined inversion approach utilizing both sampling and optimization methods was proposed in \cite{Chang2022} for simultaneous recovery of source points and scatterers.  

This paper primarily employs the direct sampling method to address the point source inversion problem\cite{Ji2021, Wang2024}. The direct sampling method offers several advantages, including the absence of a requirement for prior information, robust noise resistance, and high computational efficiency. The algorithm presented in this paper only requires the initial arrival time of acoustic waves received by sensors, and does not rely on complex wave field data or the time integral of the acoustic wave field $u(x,t)$. Define the function 
\begin{equation*}
 F(\lambda) = \inf\{t \in \mathbb{R}:  |\lambda(t)| > 0\}, 
\end{equation*}
which represents the start time of a causal signal $\lambda(t)$. Then, in our algorithm, only the initial arrival time data $F(u(x, \cdot))$ when the receiver $x$ receives the acoustic wave is used. Moreover, we theoretically demonstrate the uniqueness of reconstructing sources whose spatial support is a convex region using the initial arrival time. Theoretical analyses are provided to demonstrate the validity of the proposed sampling method in reconstructing various types of sound sources, and numerical experiments are conducted to show the effectiveness of the proposed method.

The paper is structured as follows. In Section 2, the research background and the formulation of the forward problems are presented. In Section 3, the inverse problems are illustrated and the uniqueness of reconstructing sources whose spatial support is a convex region is analyzed. The indicator function is constructed in Section 4, where its validity is also rigorously proved for various source configurations. The numerical effectiveness of the proposed method is demonstrated in Section 5 through systematic numerical experiments.

\section{The forward problems}

\subsection{The case of multiple point sources}

Let the point sources $s_j,  j = 1,  2,  \cdots,  M$ be contained in a bounded region, in which $M \in \mathbb{N}^*$ is a positive integer. Assume that $u(x, t)$ is the wave field generated by the point sources $\{s_j\}$. Then $u(x, t)$ satisfies the wave equation
\begin{equation}\label{equation_point}
c^{-2} \partial_{t t} u(x,  t)-\triangle u(x,  t)=\sum_{j=1}^{M} \lambda(t) \tau_{j} \delta\left(x-s_{j}\right),  \quad x \in \mathbb{R}^{3},  t \in \mathbb{R}, 
\end{equation}
where $c$ is the sound speed of the background meidia,  $\partial_{t t} u = \dfrac{\partial ^2 u}{\partial t^2}$, $\Delta$ is the Laplacian in $\mathbb{R}^3$, $\lambda(t)$ is the signal function, $\tau_j > 0$ are the intensities of the sources, $\delta$ is the Dirac delta distribution. 

Assume that the point sources $s_j$ are different from each other, and the signal function $\lambda(t)$ is causal,  which means $\lambda(t)$ vanishes for $t < 0$. The initial condition
\begin{equation*}
u(x, 0)=\partial_{t}u(x, 0)=0, \quad x\in \mathbb{R}^{3}
\end{equation*}
is a direct conclusion of the causality, in which $\partial_{t}u = \dfrac{\partial u}{\partial t}$.

The forward scattering problem is to solve the wave equation for the wave field $ u $ with the known source term $\sum_{j=1}^{M} \lambda(t) \tau_{j} \delta\left(x-s_{j}\right)$. The solution to the forward scattering problem \eqref{equation_point} can be expressed as
\begin{equation*}
u(x, t)=\sum_{j=1}^{M}\tau_{j}G(x, t;s_{j})\ast\lambda(t), \quad x\in\mathbb{R}^{3}, \, t\in\mathbb{R},
\end{equation*}
where
\begin{equation*}
G(x,t;s)=\frac{\delta(t-c^{-1}|x-s|)}{4\pi |x-s|}
\end{equation*}
is the Green's function of the d'Alembert operator $c^{-2} \partial_{t t} - \triangle$, and 
$$G(x,t;s)*\lambda(t) = \frac{\lambda(t-c^{-1}|x-s|)}{4\pi |x-s|}$$
is the time convolution of the Green's function $G(x,t;s)$ and the signal function $\lambda(t)$.

\subsection{The case of curve sources}
Let $ u(x,  t) $ be the wave field generated by the source located on a curve $L$, which satisfies the wave equation \begin{equation}\label{equation_curve}
c^{-2} \partial_{tt} u(x,  t) - \Delta u(x,  t) = \lambda(t) \tau(x) \delta_L(x),  \quad x \in \mathbb{R}^3,  \,  t \in \mathbb{R}, 
\end{equation}
where $ \tau(x) $ is the intensity function of the curve source and  
$$\delta_L(x) = 
\begin{cases} 
+\infty,  & x \in L,  \\ 
0,  & \text{otherwise}, 
\end{cases}$$  
which satisfies
$$\int_{O} \delta_L(x) \mathrm{d}x = l_O, $$  
where $O \subset \mathbb{R}^3$ is an arbitrary closed region,  $ l_O $ is the length of the curve $L\cap O$. 

The solution to the forward scattering problem \eqref{equation_curve} can be expressed as 
\begin{equation*}
u(x,  t) = \int_L \tau(y) G(x,  t; y) * \lambda(t) \mathrm{d}s(y),  \quad x \in \mathbb{R}^3,  \,  t \in \mathbb{R}.
\end{equation*}

\subsection{The case of surface sources}

Let $ u(x,  t) $ be the wave field generated by the source located on a surface $\Sigma$, which satisfies the wave equation  
\begin{equation}\label{equation_surface}
c^{-2} \partial_{tt} u(x,  t) - \Delta u(x,  t) = \lambda(t) \tau(x) \delta_{\Sigma}(x),  \quad x \in \mathbb{R}^3,  \,  t \in \mathbb{R}, 
\end{equation}
where $ \tau(x) $ is the intensity function of the surface source and  
$$\delta_{\Sigma}(x) = 
\begin{cases} 
+\infty,  & x \in \Sigma,  \\ 
0,  & \text{otherwise}, 
\end{cases}$$  
which satisfies
$$\int_{O} \delta_{\Sigma}(x) \mathrm{d}x = A_O, $$  
where $ A_O $ is the area of the surface $\Sigma \cap O $.  

The solution to the forward scattering problem \eqref{equation_surface} can be expressed as  
$$u(x,  t) = \int_{\Sigma} \tau(y) G(x,  t; y) * \lambda(t) \mathrm{d}s(y),  \quad x \in \mathbb{R}^3,  \,  t \in \mathbb{R}.$$

\subsection{The case of block sources}
Let $ u(x, t) $ be the wave field generated by the source located on a block $K$, which satisfies the wave equation  
\begin{equation}\label{equation_block}
c^{-2} \partial_{tt} u(x,  t) - \Delta u(x,  t) = \lambda(t) \tau(x) \delta_K(x),  \quad x \in \mathbb{R}^3,  \,  t \in \mathbb{R}, 
\end{equation}
where $ \tau(x) $ is the intensity function of the block sources and  
$$\delta_K(x) = 
\begin{cases} 
+\infty,  & x \in K,  \\ 
0,  & \text{otherwise, }
\end{cases}$$  
which satisfies
$$\int_{O} \delta_K(x) \mathrm{d}x = V_O, $$   
where $ V_O $ is the volume of the block $K\cap O $.

The solution to the forward scattering problem \eqref{equation_block} can be expressed as  
$$u(x,  t) = \int_K \tau(y) G(x,  t; y) * \lambda(t) \mathrm{d}s(y),  \quad x \in \mathbb{R}^3,  \,  t \in \mathbb{R}.$$

\section{The inverse problems and the uniqueness results}

Consider the inverse source problems: reconstruct the locations of the sources from the measured wave field data
\begin{equation}\label{wave_data}
u(x, t), \quad x\in\partial \Omega, \, t\in\mathbb{R}, 
\end{equation}
where $\Omega \subset \mathbb{R}^3$ is a spherical region with radius $R$ such that the sources are located within the sphere, and the boundary $\partial \Omega$ of the spherical region is chosen as the measurement surface.

The uniqueness of reconstructing multiple point sources has been analyzed in \cite{Chen2020}. In this section, we give the uniqueness results of reconstructing the source whose spatial support is a convex region.

\begin{Lemma}\label{lemma_time} 
Suppose that $\Omega \subset \mathbb{R}^3$ is a spherical region. Let the spatial support of the source be a closed set $S \subset \Omega$.  Assume that all the source points emit the same smooth causal signal function $\lambda(t)$ and there exists a $\epsilon >0$ such that $|\lambda(t)|>0$ in $(F(\lambda),F(\lambda)+\epsilon)$. Then, for any receiver $x \in \Omega$,  the initial
arrival time $F(u(x, \cdot))$ is given by
$$F(u(x, \cdot)) = F(\lambda) + c^{-1}|s_x-x|,$$
where $s_x\in S$ is a source point which satisfies 
$$|s_x-x| = \mathop{\min}_{y \in S}|y-x|.$$
\end{Lemma}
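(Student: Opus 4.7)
Set $t^{\star} := F(\lambda) + c^{-1}|s_x - x|$. The plan is to prove $F(u(x,\cdot)) = t^{\star}$ by a two-sided argument: first show $u(x,t) = 0$ for every $t < t^{\star}$, and then show $u(x,\cdot)$ does not vanish identically on any right neighbourhood of $t^{\star}$. Both halves start from the unified Green-function representation
\[
u(x,t) = \int_S \tau(y)\,\frac{\lambda\!\left(t - c^{-1}|x-y|\right)}{4\pi|x-y|}\,d\mu(y),
\]
which reduces to each of the four cases of Section~2 by taking $\mu$ to be the counting, arc-length, surface, or volume measure on $S$ and $\tau$ the corresponding intensity.

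For the lower bound $F(u(x,\cdot)) \geq t^{\star}$, I invoke the defining property of $s_x$: every $y \in S$ satisfies $|x-y| \geq |x-s_x|$, so for any $t < t^{\star}$,
\[
t - c^{-1}|x-y| \;\leq\; t - c^{-1}|x-s_x| \;<\; F(\lambda).
\]
By causality, $\lambda$ vanishes on $(-\infty, F(\lambda))$, hence the integrand is identically zero on $S$ and $u(x,t) = 0$.

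For the upper bound $F(u(x,\cdot)) \leq t^{\star}$, the key observation is that $\lambda$, being smooth with $|\lambda| > 0$ throughout $(F(\lambda), F(\lambda)+\epsilon)$, must retain a single fixed strict sign on that interval by the intermediate-value theorem. Fix $\eta \in (0,\epsilon)$ and evaluate $u(x, t^{\star} + \eta)$. The integrand vanishes outside the thin shell
\[
S_\eta := \{y \in S : |x-y| - |x-s_x| \leq c\eta\},
\]
and on the interior of $S_\eta$ the argument of $\lambda$ equals $F(\lambda) + \eta - c^{-1}(|x-y|-|x-s_x|) \in (F(\lambda), F(\lambda) + \eta] \subset (F(\lambda), F(\lambda)+\epsilon)$, so $\lambda$ carries its fixed sign. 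Since $\tau \geq 0$ and $1/|x-y| > 0$, the integrand is sign-definite throughout the open shell, and cancellation is impossible; the integral is therefore nonzero as soon as $S_\eta$ carries strictly positive source mass.

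The main obstacle is precisely this last positivity claim, which I would dispatch case by case. For point sources, $s_x \in S_\eta$ itself contributes the single nonzero term $\tau_{s_x}\lambda(F(\lambda)+\eta)/(4\pi|x-s_x|)$. For curve, surface, and block sources, the inequality $|x-y| \leq |x-s_x| + c\eta$ carves out an open neighbourhood of $s_x$ inside the one-, two-, or three-dimensional object $S$; because $s_x$ lies in the support of $\tau$, that neighbourhood has strictly positive arc length, area, or volume and $\tau$ is not identically zero on it, so the sign-definite integrand yields $u(x,t^{\star}+\eta) \neq 0$ for every small $\eta > 0$, closing the equality $F(u(x,\cdot)) = t^{\star}$.
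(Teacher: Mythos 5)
Your proposal is correct; the paper itself gives no argument here (it merely declares the lemma ``straightforward'' on the strength of the hypothesis on $\lambda$), and your two-sided proof --- vanishing of the integrand for $t<t^{\star}$ by causality, and sign-definiteness of $\lambda$ on $(F(\lambda),F(\lambda)+\epsilon)$ combined with positive source mass on the shell $S_\eta$ for $t>t^{\star}$ --- is exactly the argument being elided, carried out in full. The one place where you add something the paper leaves implicit is your use of $\tau\geq 0$ to rule out cancellation in the distributed cases; this is genuinely needed (a sign-changing intensity can make contributions at equal distances from $x$ cancel identically, falsifying the formula for that receiver), and it is consistent with the paper's convention that $\tau$ is an intensity with $\tau_j>0$ in the point-source case.
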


According to the assumption on the signal function $\lambda(t)$, the proof of Lemma \ref{lemma_time} is straightforward. Note that for a receiver $x\in\Omega$, the source point $\displaystyle{s_x = \mathop{\argmin}_{y \in K}|y-x|}$ plays an important role in our algorithm. Let $S$ be the spatial support of any form of sources. For a source point $s\in S$, define the control area of $s$ by
$$\Gamma_s:=\{x\in\partial\Omega:~|s - x| = \min\limits_{y\in S}|y-x|\}.$$
One important characteristic of the controlled area is: for $x\in\Gamma_s$, we have
\begin{equation}\label{control_property}
|s-x|-c(F(u(x, \cdot)) - F(\lambda)) = 0.
\end{equation}

Based on Lemma \ref{lemma_time}, we have the following uniqueness result.

\begin{Theorem}\label{thm_uniqueness_block}
Let $\Omega \subset \mathbb{R}^3$ be a spherical region, and $u(x,t)$ be the wave field generated by the source whose spatial support is a closed region $K\subset\Omega$. Assume that $K$ is convex with smooth $C^{1}$ boundary, all the source points emit the same smooth causal signal function $\lambda(t)$, and there exists a $\epsilon >0$ such that $|\lambda(t)|>0$ in $(F(\lambda),F(\lambda)+\epsilon)$. Then $K$ is uniquely determined by the measured data \eqref{wave_data}.
\end{Theorem}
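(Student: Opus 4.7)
The plan is to use Lemma \ref{lemma_time} to convert the wave-field data into a purely geometric datum — the Euclidean distance from each sensor location to $K$ — and then show that a closed convex region with $C^1$ boundary inside $\Omega$ is uniquely determined by this distance function on $\partial\Omega$ via a ball-exclusion reconstruction formula. The first step is an immediate corollary of the lemma; the uniqueness content is concentrated in the geometric step.

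For each receiver $x \in \partial\Omega$, Lemma \ref{lemma_time} (applied with $S = K$) gives
$$d_K(x) \;:=\; c\bigl(F(u(x,\cdot)) - F(\lambda)\bigr) \;=\; \min_{y \in K}|y - x|,$$
so the measured data \eqref{wave_data} (together with the known signal $\lambda$) determines the distance function $d_K$ on $\partial\Omega$. Suppose $K_1, K_2 \subset \Omega$ are two closed convex regions with $C^1$ boundary yielding the same data, so $d_{K_1}(x) = d_{K_2}(x) =: d(x)$ on $\partial\Omega$, and define
$$\tilde K \;:=\; \Omega \setminus \bigcup_{x \in \partial\Omega} B(x, d(x)),$$
where $B(x, r)$ is the open Euclidean ball of radius $r$. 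Since $\tilde K$ is built only from $d$, it suffices to prove $\tilde K = K$ for any admissible $K$; then $K_1 = \tilde K = K_2$.

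The inclusion $K \subseteq \tilde K$ is immediate, because $K \cap B(x, d_K(x)) = \emptyset$ for every $x \in \partial\Omega$ by the definition of $d_K$. For the reverse inclusion, fix $p \in \Omega \setminus K$ and let $q \in \partial K$ be its metric projection onto $K$, which is unique by convexity. The $C^1$ hypothesis ensures a well-defined outer unit normal $\nu$ to $\partial K$ at $q$, and the standard characterization of projection onto a convex $C^1$ body gives $(p - q)/|p - q| = \nu$. Extend the normal ray $\{q + t\nu : t \ge 0\}$ until it meets $\partial\Omega$ at a point $x_0 = q + t_0\nu$ with $t_0 > |p - q|$. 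Using the supporting half-space $\{z : \langle z - q, \nu\rangle \le 0\} \supseteq K$, for every $y \in K$ we have
$$|x_0 - y|^2 \;=\; |q - y|^2 + 2t_0\langle q - y, \nu\rangle + t_0^2 \;\ge\; t_0^2,$$
so $d_K(x_0) = t_0 = |x_0 - q|$, while $|x_0 - p| = t_0 - |p - q| < d(x_0)$. Hence $p \in B(x_0, d(x_0))$, so $p \notin \tilde K$, proving $\tilde K \subseteq K$.

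The main obstacle is precisely this reverse inclusion: for every exterior point $p$ one must produce a witness $x_0 \in \partial\Omega$ whose nearest point in $K$ coincides with the projection of $p$. Both convexity (to ensure the unique projection and the supporting hyperplane estimate above) and $C^1$ smoothness (to make the outer normal single-valued, so that $p$ actually lies on the normal ray emanating from $q$) are used essentially. If either hypothesis is relaxed — a non-convex $K$, or a corner of $\partial K$ at $q$ — the nearest point to $x_0$ may jump away from $q$ and the witness argument breaks down.
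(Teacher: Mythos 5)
Your proposal is correct and follows essentially the same route as the paper: both reduce the data to the distance function $x\mapsto\min_{y\in K}|y-x|$ on $\partial\Omega$ via Lemma \ref{lemma_time}, reconstruct $K$ as $\Omega$ minus the union of the open balls $B(x,d(x))$, and certify the reverse inclusion by following the outward normal ray from the projection of an exterior point to a witness sensor on $\partial\Omega$. Your supporting--half-space inequality $|x_0-y|^2\ge t_0^2$ is simply a more explicit rendering of the paper's ``$\tilde s$ would lie on the outward normal side of the tangent plane, contradicting convexity'' step.
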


\begin{proof}
For any point $s \in \partial K$, draw the tangent plane $P_s$ and the outward normal line $L_s$ to $\partial K$ at the point $s$. Let $x_s$ be the intersection point of $\partial \Omega$ and the outward normal line $L_s$. We now prove that 
\begin{equation}\label{argmin_normal}
s = \mathop{\argmin}_{y \in K}|y-x_s|.
\end{equation}
Otherwise, suppose there exists another point $\tilde{s} \in K~(\tilde{s} \neq s)$ such that 
$$|\tilde{s}-x_s| = \mathop{\min}_{y \in K}|y-x_s|.$$
Then we have $|\tilde{s}-x_s| \leq |s-x_s|$.
This implies that $\tilde{s}$ lies on the outward normal side of the tangent plane $P_s$, which contradicts the convexity of $K$.

For any $x \in \partial\Omega$, according to Lemma \ref{lemma_time}, the scattered data \eqref{wave_data} implies the information
$$ c|F(u(x, \cdot)) - F(\lambda)|= \mathop{\min}_{y \in K}|y-x| =: R_x.$$
Define
$$K_{re} = \Omega \setminus \bigcup\limits_{x \in \partial \Omega} B_x(R_x), $$
where
$$B_x(R_x) = \{ y \in \mathbb{R}^3 : |y - x| < R_x \}.$$
Then we prove that $K_{re}=K$. For a point $z \in K$, for any $x\in\partial\Omega$, it is clear that $z$ is not an interior point of $B_x(R_x)$. Thus we have
$$ z \in K_{re}. $$
For a point $z \in \Omega \setminus K$,  since $\partial K$ is $C^{1}$ continuous, there exists a point $s_z \in \partial K$ such that the line $s_z z$ is the outward normal line to $\partial K$, which intersects $\partial \Omega$ at $x_z$. Then \eqref{argmin_normal} implies
$$|s_z-x_z| = \mathop{\min}_{y \in K}|y-x_z|.$$
Then we have $ z \in B_{x_z}(R_{x_z})$, which implies $z \in \Omega \setminus K_{re}$. 
\end{proof}

For the cases of surface sources and curve sources, the spatial support of the source is convex only when the source occupies a convex planar patch area or a linear segment area. According to analyses similar to the proof of Theorem \ref{thm_uniqueness_block}, we have the following propositions.  

\begin{Proposition}\label{Prop_planar_unique}
Let $\Omega$ be a spherical region, and $u(x,  t)$ be the wave field generated by the source occupying a convex planar patch area $\Sigma\subset \Omega$ on a two-dimensional cross section $\mathbb{P} \subset \mathbb{R}^3$. Assume that all the source points emit the same smooth causal signal function $\lambda(t)$ and there exists a $\epsilon >0$ such that $|\lambda(t)|>0$ in $(F(\lambda),F(\lambda)+\epsilon)$. Then the planar patch area $\Sigma$ is uniquely determined by the measured data 
\begin{equation}\label{wave_data_2D}
u(x,  t), \quad x \in \partial \Omega\cap \mathbb{P},~t \in \mathbb{R}.
\end{equation}
\end{Proposition}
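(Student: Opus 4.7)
The plan is to mirror the three-dimensional argument of Theorem \ref{thm_uniqueness_block} inside the two-dimensional cross section $\mathbb{P}$. Since every receiver and every source point now lie in the same plane, Lemma \ref{lemma_time} still yields, for each $x\in \partial\Omega\cap\mathbb{P}$, the distance
$$R_x := c\bigl|F(u(x,\cdot)) - F(\lambda)\bigr| = \min_{y\in\Sigma}|y-x|,$$
so the data \eqref{wave_data_2D} encode the same geometric quantity as before, only restricted to the circular receiver curve $\partial\Omega\cap\mathbb{P}$.

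The first step will be the planar analogue of the normal-direction argument. For any $s\in\partial\Sigma$, the convexity of $\Sigma$ inside $\mathbb{P}$ provides a supporting line through $s$ with an outward unit normal (within $\mathbb{P}$). I will extend the corresponding outward normal ray until it meets $\partial\Omega\cap\mathbb{P}$ at a point $x_s$, which exists because $\Sigma\subset\Omega$ and $\partial\Omega\cap\mathbb{P}$ is a full circle surrounding $\Sigma$. Exactly as in Theorem \ref{thm_uniqueness_block}, any competitor $\tilde s\in\Sigma$ with $|\tilde s-x_s|\le|s-x_s|$ would have to lie on the outward side of the supporting line, contradicting convexity, so $s=\argmin_{y\in\Sigma}|y-x_s|$.

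The second step will be to reconstruct $\Sigma$ as
$$\Sigma_{re}:=(\Omega\cap\mathbb{P})\setminus\bigcup_{x\in\partial\Omega\cap\mathbb{P}}B_x(R_x)$$
and prove $\Sigma_{re}=\Sigma$ by double inclusion. The inclusion $\Sigma\subseteq\Sigma_{re}$ follows at once from the definition of $R_x$. For the reverse inclusion, given $z\in(\Omega\cap\mathbb{P})\setminus\Sigma$, convexity of $\Sigma$ in $\mathbb{P}$ furnishes a unique nearest point $s_z\in\partial\Sigma$, and the segment from $s_z$ to $z$ lies along an outward supporting normal at $s_z$; prolonging it to $x_z\in\partial\Omega\cap\mathbb{P}$ and applying the first step gives $R_{x_z}=|s_z-x_z|$, whence $|z-x_z|<R_{x_z}$ and therefore $z\notin\Sigma_{re}$.

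The main obstacle I expect is handling the possibility that $\partial\Sigma$ is not $C^1$ (a convex planar region can have corners), so the outward normal at $s$ may fail to be unique. This will be resolved by choosing \emph{any} supporting line at $s$; only the existence of one receiver $x_s$ with the closest-point property is needed, and the reverse-inclusion argument only uses the nearest-point projection onto a closed convex set, which is always well defined. A secondary subtlety is to confirm that all relevant normal rays actually meet $\partial\Omega\cap\mathbb{P}$ rather than leaving the plane; this is automatic because every construction is performed inside the single plane $\mathbb{P}$, so planar rays remain in $\mathbb{P}$ and hit the great circle $\partial\Omega\cap\mathbb{P}$.
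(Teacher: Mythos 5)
Your proposal is correct and follows essentially the same route the paper intends: the paper gives no separate proof of Proposition \ref{Prop_planar_unique}, stating only that it follows by analyses similar to the proof of Theorem \ref{thm_uniqueness_block}, and your planar adaptation (the outward-normal/supporting-line argument showing $s=\argmin_{y\in\Sigma}|y-x_s|$, followed by the double-inclusion reconstruction $\Sigma_{re}=(\Omega\cap\mathbb{P})\setminus\bigcup_{x}B_x(R_x)$) is exactly that adaptation carried out inside the cross section. Your replacement of the tangent plane by an arbitrary supporting line is a sensible refinement, since the proposition, unlike Theorem \ref{thm_uniqueness_block}, does not assume a $C^1$ boundary.
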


\begin{Proposition}\label{Prop_line_unique}
Let $\Omega$ be a spherical region, and $u(x,  t)$ be the wave field generated by the source occupying a line segment $L\subset \Omega$ on a two-dimensional cross section $\mathbb{P} \subset \mathbb{R}^3$. Assume that all the source points emit the same smooth causal signal function $\lambda(t)$ and there exists a $\epsilon >0$ such that $|\lambda(t)|>0$ in $(F(\lambda),F(\lambda)+\epsilon)$. Then the line segment $L$ is uniquely determined by the measured data \eqref{wave_data_2D}.
\end{Proposition}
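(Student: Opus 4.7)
The plan is to mirror the reconstruction strategy used in Theorem~\ref{thm_uniqueness_block}, but carried out in the two-dimensional cross section $\mathbb{P}$ and with the $C^{1}$ normal-line construction replaced by a short case analysis at the endpoints of $L$. By Lemma~\ref{lemma_time}, each receiver $x\in\partial\Omega\cap\mathbb{P}$ supplies
$$R_x:=c\bigl(F(u(x,\cdot))-F(\lambda)\bigr)=\min_{y\in L}|y-x|.$$
I would then define the reconstructed set
$$L_{re}:=(\Omega\cap\mathbb{P})\setminus\bigcup_{x\in\partial\Omega\cap\mathbb{P}} D_x(R_x),\qquad D_x(R_x):=\{y\in\mathbb{P}:|y-x|<R_x\},$$
and show $L_{re}=L$. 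The inclusion $L\subseteq L_{re}$ is immediate, since for any $z\in L$ and any $x\in\partial\Omega\cap\mathbb{P}$, $|z-x|\geq R_x$, so $z$ lies in no open disk $D_x(R_x)$.

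For the reverse inclusion $L_{re}\subseteq L$, fix $z\in(\Omega\cap\mathbb{P})\setminus L$. Convexity and compactness of $L$ give a unique closest point $s_z:=\argmin_{y\in L}|y-z|$, and I would let $x_z$ be the intersection of the ray from $s_z$ through $z$ with $\partial\Omega\cap\mathbb{P}$. The key claim is $s_z=\argmin_{y\in L}|y-x_z|$; granting it, one has $R_{x_z}=|s_z-x_z|$, and since $z$ lies strictly between $s_z$ and $x_z$ we get $|z-x_z|<R_{x_z}$, placing $z$ inside $D_{x_z}(R_{x_z})$ and hence outside $L_{re}$.

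The main obstacle is verifying the key claim when $s_z$ is an endpoint of $L$, because at an endpoint there is no smooth outward normal to $L$ to invoke as in Theorem~\ref{thm_uniqueness_block}. I would split into two cases. If $s_z$ is an interior point of $L$, then the segment from $s_z$ to $z$ is perpendicular to the line supporting $L$, and the Pythagorean theorem gives $|y-x_z|^{2}=|y-s_z|^{2}+|s_z-x_z|^{2}\geq|s_z-x_z|^{2}$ for every $y\in L$. If $s_z$ is an endpoint, the minimality of $s_z$ relative to $z$ gives the obtuse-angle condition $(z-s_z)\cdot(y-s_z)\leq 0$ for all $y\in L$; since $x_z-s_z$ is a positive multiple of $z-s_z$, this transfers to $(x_z-s_z)\cdot(y-s_z)\leq 0$, so that
$$|y-x_z|^{2}=|y-s_z|^{2}-2(y-s_z)\cdot(x_z-s_z)+|x_z-s_z|^{2}\geq|x_z-s_z|^{2}.$$
In either case $s_z$ minimizes $|\,\cdot\,-x_z|$ over $L$, which closes the argument.
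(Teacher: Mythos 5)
Your proof is correct and follows the same strategy the paper intends: the paper gives no separate argument for Proposition~\ref{Prop_line_unique}, stating only that it follows by ``analyses similar to the proof of Theorem~\ref{thm_uniqueness_block},'' and your construction of $L_{re}$ as the complement of the union of disks $D_{x}(R_x)$ is exactly the adaptation of $K_{re}$ to the cross section $\mathbb{P}$. The one place where ``similar'' is not automatic --- a segment has no $C^1$ outward normal at its endpoints, so the tangent-plane argument of Theorem~\ref{thm_uniqueness_block} does not transfer verbatim --- is precisely the detail you supply, and your use of the variational inequality $(z-s_z)\cdot(y-s_z)\le 0$ for the projection onto the convex set $L$ handles both the interior and endpoint cases cleanly; this is a worthwhile explicit addition rather than a deviation from the paper's route.
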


Note that in Propositions \ref{Prop_planar_unique} and Proposition \ref{Prop_line_unique}, only the measured data on a two-dimensional cross section are required for the reconstruction.

\section{The direct sampling method}

Let $ D \subset \Omega $ be a bounded sampling region. Assume that $ S \subset D $,  where $ S $ is the spatial support of the source.

For the sampling point $ z \in D $,  define the indicator function
\begin{equation}\label{indicator}
 I(z) = \int_{\partial\Omega}\frac{1}{\sqrt{\big| |z-x| - c(F(u(x, \cdot))-F(\lambda))\big|}}\mathrm{d}s(x).
\end{equation}
Based on the indicator function \ref{indicator}, Algorithm 1 is provided to reconstruct the sources.

\begin{table}[!ht]
\centering
\label{tab:mmfs1}
\begin{tabular}{lp{0.8\textwidth}}
\toprule
\multicolumn{2}{l}{\textbf{Algorithm 1:} The reconstruction of stationary sources.}\\
\hline
\textbf{Step 1} & 
Choose a convex region $\Omega$, a signal function $\lambda(t)$ and the spatial support $S$ of the stationary source. Collect the wave data $u(x_i,  t_k)$ for the sensing points $x_i \in \partial \Omega~(i = 1, 2, \ldots,  N_x)$ and the discrete time steps $t_k \in [0,  T]~(k = 1, 2, \ldots,  N_t)$,  where $T$ is a chosen terminal time. \\
\textbf{Step 2} & 
Choose a sampling region $D \subset \Omega$ such that $S \subset D$ and $D \cap \partial \Omega = \emptyset$. Select a grid of sampling points $z_l(l = 1,  \ldots,  N_z)$ in $D$. Compute\\
& \quad \quad \quad \quad \quad $\displaystyle{ I(z_l) = \sum_{i=1}^{N_x}\frac{1}{\sqrt{\big| |z_l-x_i|  - c(F(u(x_i, \cdot))-F(\lambda))\big|}}.}$\\
\textbf{Step 3} & 
Mesh $I(z_l)$ on the sampling grid. The locations of the point sources are given by the locations of $z_l$ for which $I(z_l)$ are relatively large. \\
\bottomrule
\end{tabular}
\end{table}

\begin{Lemma}\label{Lemma_integral}
Let $\Omega$ be a spherical area with the boundary $\partial \Omega$. For two different points $z, y\in \Omega$, we have
\begin{equation*}
\int_{\partial \Omega} \frac{1}{\sqrt{\big||z-x|-|y-x|\big|}} \,  \mathrm{d}s(x)<+\infty.
\end{equation*}
\end{Lemma}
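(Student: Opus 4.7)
The plan is to exploit the algebraic identity
$$|z-x|-|y-x| \;=\; \frac{|z-x|^2-|y-x|^2}{|z-x|+|y-x|}$$
so as to reduce the square-root singularity in the integrand to that of an affine-linear function. The numerator $|z-x|^2-|y-x|^2 = 2(z-y)\cdot x + |z|^2 - |y|^2$ is affine in $x$. The denominator is bounded below by $|z-y|>0$ via the triangle inequality (using $z\neq y$), and above by $2\mathrm{diam}(\Omega)$. Hence the integrand is majorised by $C\,|\phi(x)|^{-1/2}$ for a positive constant $C$, where $\phi(x):=2(z-y)\cdot x + |z|^2 - |y|^2$ is a linear function of $x$.

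Next, I would choose Cartesian coordinates with the $x_3$-axis along the line through $z$ and $y$, writing $z=(0,0,a)$ and $y=(0,0,b)$ with $a\neq b$. In these coordinates
$$\phi(x) \;=\; 2(a-b)\Bigl(x_3-\tfrac{a+b}{2}\Bigr),$$
which depends only on the third coordinate of $x$. The singular locus on $\partial\Omega$ is therefore the circle $\{x_3=(a+b)/2\}\cap\partial\Omega$.

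I would then parameterise the sphere $\partial\Omega$ of radius $R$ centred at some point $O_c=(c_1,c_2,c_3)$ by its height $x_3\in[c_3-R,\,c_3+R]$ and the azimuthal angle $\psi\in[0,2\pi)$ about the vertical line through $O_c$. A standard spherical-coordinates calculation yields the surface element $ds=R\,dx_3\,d\psi$. Consequently
$$\int_{\partial\Omega}\frac{ds(x)}{\sqrt{\bigl|\,|z-x|-|y-x|\,\bigr|}} \;\leq\; C' \int_{c_3-R}^{c_3+R} \frac{dx_3}{\sqrt{\bigl|x_3-(a+b)/2\bigr|}},$$
which is an elementary one-dimensional integral of type $\int |t|^{-1/2}\,dt$ and is plainly finite.

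The conceptual work is essentially done once the algebraic identity is applied: the axial symmetry of the integrand about the line $zy$ collapses the singularity onto a one-dimensional problem. The only mild technical points are the uniform lower bound on $|z-x|+|y-x|$ over $\partial\Omega$ and the justification of the surface-element formula $ds=R\,dx_3\,d\psi$; both are routine. Should one prefer a more intrinsic argument, an equivalent route is to work in a tubular neighbourhood of the singular circle on the sphere and use the transversal intersection of the perpendicular bisector plane of $zy$ with $\partial\Omega$ to write $\phi$ locally as $ct+O(t^2)$ in the signed geodesic distance $t$ to the circle; the same finite integral bound then emerges. This transversality step is the only place where genuine geometric input is needed, and I expect it to be the main (albeit minor) obstacle in a fully rigorous write-up.
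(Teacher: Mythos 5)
Your proof is correct, but it takes a genuinely different route from the paper. The paper parameterizes $\partial\Omega$ in spherical coordinates, rotates so that $z$ and $y$ share their first two coordinates, identifies the singular set $\{E=0\}$ as the circle $\varphi=\varphi_0$, and then Taylor-expands $E(\varphi,\theta)=|z-x|-|y-x|$ in $\varphi$ about $\varphi_0$, using the non-vanishing of $\partial E/\partial\varphi(\varphi_0,\theta)$ to get the lower bound $|E|\gtrsim|\varphi-\varphi_0|$ near the circle and splitting the integral into a neighbourhood of the circle and its complement. You instead rationalize, $|z-x|-|y-x|=\bigl(|z-x|^2-|y-x|^2\bigr)/\bigl(|z-x|+|y-x|\bigr)$, bound the denominator between $|z-y|>0$ and a multiple of $R$, and observe that the numerator is an affine function of $x_3$ once the axis is aligned with the line $zy$; Archimedes' hat-box identity $\mathrm{d}s=R\,\mathrm{d}x_3\,\mathrm{d}\psi$ then collapses everything to $\int|t|^{-1/2}\,\mathrm{d}t$. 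Your argument is more elementary and more robust: it needs no derivative computation, no splitting of the domain, and it handles uniformly the degenerate position where the singular circle shrinks to a tangency point (where the paper's first-derivative bound would formally fail because $\sin\varphi_0=0$, though that case cannot occur here). The paper's local-expansion technique, on the other hand, is the template reused in Theorems \ref{thm_line_3D}--\ref{thm_plan_3D}, where the zero set of $E_z$ is no longer a coordinate circle and your algebraic shortcut is unavailable. Two trivial remarks: the numerator should read $|z|^2-|y|^2-2(z-y)\cdot x$ (your sign on the linear term is flipped, which is immaterial since only $|\phi|$ enters), and your closing paragraph about transversality in a tubular neighbourhood is superfluous — the hat-box reduction already does all the work.
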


\begin{proof}
To avoid complicated expressions, assume that the radius of $\Omega$ is $1$. Represent the points on the spherical surface using parametric coordinates
\begin{equation}\label{parametric_coordinates}
x = (x_1,x_2,x_3) = (\sin \varphi \cos \theta,  \sin \varphi \sin \theta,  \cos \varphi),
\end{equation}
where $\varphi \in [0,  \pi]$ is the angle between the vector $(x_1,x_2,x_3)$ and the positive $x_3$-axis direction, and $\theta \in [0,  2\pi]$ is the azimuthal angle in $x_1x_2$-plane. 

Since the rotation transformation does not affect the value of the integral on the spherical surface, without loss of generality, take $y = (a,  b,  h)$ and $z = (a,  b,  k)$ with $\sqrt{a^2+b^2+h^2}\leq 1$, $\sqrt{a^2+b^2+k^2}\leq 1$, $h > k$,  then we have
\begin{align*}
|y - x| &= \sqrt{a^2 + b^2 - 2a\sin\varphi\cos\theta - 2b\sin\varphi\sin\theta + 1 - 2h\cos\varphi + h^2}, \\
|z - x| &= \sqrt{a^2 + b^2 - 2a\sin\varphi\cos\theta - 2b\sin\varphi\sin\theta + 1 - 2k\cos\varphi + k^2}.
\end{align*}

Define the function
$$ E(\varphi, \theta) = E(x) := |z - x| - |y - x|. $$
Then we have $E(\varphi, \theta) = 0$ if and only if $\varphi = \varphi_0 = \arccos \dfrac{h+k}{2}$. For any $\theta\in[0,2\pi]$, the Fourier expansion of $E(\varphi, \theta)$ with respect to $\varphi$ at  $\varphi = \varphi_0$ is
\begin{equation*}
E(\varphi, \theta) = E(\varphi_0, \theta) + \frac{\partial E}{\partial \varphi}(\varphi_0,  \theta) \cdot (\varphi - \varphi_0) + O\left((\varphi - \varphi_0)^2\right),  
\end{equation*}
where 
$$ \frac{\partial E}{\partial \varphi}(\varphi_0,  \theta) = \frac{(k-h)\sin\varphi_0}{d}, $$
in which 
\begin{align*}
d &= \sqrt{a^2 + b^2 - 2a\sin\varphi\cos\theta - 2b\sin\varphi\sin\theta + 1 - 2h\cos\varphi + h^2} \\
&\leq \sqrt{a^2 + b^2  + 2|a| + 2|b|  + 1 + 2|h| + h^2} =: C_1. 
\end{align*}

Noting that $E(\varphi_0, \theta) = 0$ and $\dfrac{\partial E}{\partial \varphi}(\varphi_0,  \theta) \neq 0$, when $|\varphi - \varphi_0| < \epsilon$ (with $\epsilon>0$ sufficiently small), we have
\begin{equation*}
|E(\varphi,  \theta)| = |E(\varphi,  \theta) - E(\varphi_0,  \theta)| \geq \frac{1}{2} \left|\frac{\partial E}{\partial \varphi}(\varphi_0,  \theta) \cdot (\varphi - \varphi_0)\right| \geq \frac{(k-h)\sin \varphi_0}{2C_1} |\varphi - \varphi_0|.
\end{equation*}
Therefore, we can get
\begin{align*}
\int_{0}^{2\pi} \int_{\varphi_0 - \epsilon}^{\varphi_0 + \epsilon} \frac{\sin \varphi}{\sqrt{|E(\varphi,  \theta)|}} \mathrm{d}\varphi \mathrm{d}\theta 
&\leq \int_{0}^{2\pi} \int_{\varphi_0 - \epsilon}^{\varphi_0 + \epsilon} \frac{\sqrt{2C_1} \sin \varphi}{\sqrt{(k-h)\sin \varphi_0} \cdot \sqrt{|\varphi - \varphi_0|}} \mathrm{d}\varphi \mathrm{d}\theta \\
&= \frac{2\pi\sqrt{2C_1}}{\sqrt{(k-h)\sin \varphi_0}} \int_{\varphi_0 - \epsilon}^{\varphi_0 + \epsilon} \frac{\sin \varphi}{\sqrt{|\varphi - \varphi_0|}} \mathrm{d}\varphi < +\infty.
\end{align*}
This proves
\begin{align*}
& \int_{\partial \Omega} \frac{1}{\sqrt{\big||z-x|-|y-x|\big|}} \,  \mathrm{d}s(x)\\
=& \int_{0}^{2\pi} \left(\int_{0}^{\varphi_0 - \epsilon} \frac{\sin \varphi}{\sqrt{|E(\varphi,  \theta)|}} \mathrm{d}\varphi + \int_{\varphi_0 - \epsilon}^{\varphi_0 + \epsilon} \frac{\sin \varphi}{\sqrt{|E(\varphi,  \theta)|}} \mathrm{d}\varphi + \int_{\varphi_0 + \epsilon}^{\pi} \frac{\sin \varphi}{\sqrt{|E(\varphi,  \theta)|}} \mathrm{d}\varphi\right)\mathrm{d}\theta \\
<& +\infty.
\end{align*}
\end{proof}

\begin{Theorem}\label{thm_effectiveness_multiple_points}
Let $\Omega \subset \mathbb{R}^3$ be a spherical region with the boundary $\partial \Omega$, $u(x, t)$ be the solution to the wave equation \eqref{equation_point} with $M\in\mathbb{N}^*$ and the set of sources $S = \{s_j,  j=1, 2,\cdots,M\}\subset \Omega$. The sampling region $D \subset \Omega$ is chosen such that $S \subset D$ and $D \cap \partial \Omega = \emptyset$. Assume that for each source point $ s_j $,  there exists at least a point $ x \in \partial \Omega $ such that $\displaystyle{s_j = \argmin\limits_{y\in S} |y-x|}$. Then the indicator function $I(z)$ defined by \eqref{indicator} satisfies
$$I(z) 
\begin{cases}
    = +\infty,  \quad & z \in S,  \\
    < +\infty,  \quad & z \in D \setminus S. 
\end{cases}$$
\end{Theorem}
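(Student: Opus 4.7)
The plan is to combine Lemma \ref{lemma_time} with Lemma \ref{Lemma_integral} after decomposing $\partial\Omega$ into the control areas $\Gamma_{s_j}$ introduced in Section 3. By Lemma \ref{lemma_time}, for every $x \in \partial\Omega$ one has $c(F(u(x,\cdot)) - F(\lambda)) = \min_{y \in S}|y-x|$, so the integrand of $I(z)$ can be written as $1/\sqrt{\big||z-x| - \min_{y\in S}|y-x|\big|}$. On $\Gamma_{s_j}$ this simplifies to $1/\sqrt{\big||z-x| - |s_j-x|\big|}$. Since the $\Gamma_{s_j}$ cover $\partial\Omega$ up to the measure-zero set of receivers having tied nearest source points, I would split $I(z) = \sum_{j=1}^{M} \int_{\Gamma_{s_j}} \cdots\, ds(x)$ and then analyze the two cases separately.

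For $z \in D \setminus S$ we have $z \neq s_j$ for every $j$, so Lemma \ref{Lemma_integral} applies with $y = s_j$ and yields $\int_{\partial\Omega}\frac{1}{\sqrt{\big||z-x|-|s_j-x|\big|}}\, ds(x) < +\infty$. Bounding each piece of the partition by the integral over all of $\partial\Omega$ and summing over the finite index set $\{1,\dots,M\}$ immediately gives $I(z) < +\infty$.

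For $z = s_{j_0} \in S$, I would invoke the hypothesis that some $x_0 \in \partial\Omega$ satisfies $s_{j_0} = \argmin_{y\in S}|y-x_0|$, i.e.\ $|s_{j_0}-x_0| < |s_i-x_0|$ for every $i \neq j_0$. Because $S$ is finite and each map $x \mapsto |s_i - x|$ is continuous, these strict inequalities persist on a relatively open neighborhood $U \subset \partial\Omega$ of $x_0$. Hence $U \subset \Gamma_{s_{j_0}}$ has strictly positive surface measure, and on $U$ the property \eqref{control_property} combined with $z = s_{j_0}$ gives $|z-x| = |s_{j_0}-x| = \min_{y\in S}|y-x|$, so the integrand is $+\infty$ on a set of positive measure. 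This forces $I(z) = +\infty$.

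The only delicate point is ensuring that $\Gamma_{s_{j_0}}$ has positive surface measure whenever it is nonempty; this is handled by the openness argument above, which is essentially a continuity observation rather than a substantive analytic difficulty. The genuinely quantitative estimate is packaged inside Lemma \ref{Lemma_integral}, whose square-root singularity is precisely what makes the bound for $z \notin S$ finite while still allowing divergence when $z$ hits a source. Thus no further estimation is needed beyond the finite-$S$ partition argument and a direct application of Lemma \ref{Lemma_integral} for each $s_j$.
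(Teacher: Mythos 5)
Your proposal is correct and follows essentially the same route as the paper's proof: decompose $\partial\Omega$ into the control areas $\Gamma_{s_j}$, use the positive surface measure of $\Gamma_{s_{j_0}}$ (obtained from the strict-argmin hypothesis plus continuity) to force divergence when $z\in S$, and bound $I(z)$ by the finite sum $\sum_{j=1}^{M}\int_{\partial\Omega}\bigl||z-x|-|s_j-x|\bigr|^{-1/2}\,\mathrm{d}s(x)$ via Lemma~\ref{Lemma_integral} when $z\in D\setminus S$. The only cosmetic difference is that you phrase the decomposition as an equality up to a tie set, whereas the paper simply uses the inequality $I(z)\leq\sum_j\int_{\Gamma_j}$, which suffices since the integrand is nonnegative.
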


\begin{proof}
For clarity of presentation, assume that the radius of $\Omega$ is $1$. Denote by $\Gamma_j$ the control area of $s_j$. Then we have $\partial \Omega = \bigcup\limits_{j=1,2,\cdots,M}\Gamma_j $.

When $ z \in S $, without loss of generality, assume that $ z = s_1 $. According to the hypothesis, there exists a point $ x  = ( \sin \varphi_0 \cos \theta_0,  \sin \varphi_0\sin \theta_0, \cos \varphi_0)$ such that 
$$ |s_1-x| < |s_j-x|,  \quad j = 2,  3,  \dots,  M. $$
Due to the continuity of $|s_j-x|~(j=1,2,\cdots,M)$ with respect to $\varphi$ and $\theta$,  there exists a $ \epsilon > 0 $ such that for $ \widetilde{x} = ( \sin \varphi \cos \theta,   \sin \varphi\sin \theta, \cos \varphi) $ with $ \theta \in (\theta_0 - \epsilon,  \theta_0 + \epsilon) $ and $\varphi \in (\varphi_0 - \epsilon,  \varphi_0 + \epsilon)$,  we have 
$$ |s_1-\widetilde{x}| < |s_j-\widetilde{x}|,  \quad j = 2,  3,  \dots,  M.  $$
Then the area of $\Gamma_1$ satisfies $A(\Gamma_1)>0$. It follows from the property \eqref{control_property} of the control area that
$$I(z) = +\infty.$$

When $ z \in D \setminus S $, combining with \eqref{control_property}, the indicator function satisfies
\begin{align*}
I(z) &= \int_{\partial \Omega} \frac{1}{\sqrt{ \Big||z-x| - \min\limits_{y\in S} |y-x| \Big| }} \mathrm{d}s(x)\\
&\leq \sum_{j=1}^{M}\int_{\Gamma_j} \frac{1}{\sqrt{ \Big||z-x| - \min\limits_{y\in S} |y-x| \Big| }} \mathrm{d}s(x) \\
&= \sum_{j=1}^{M}\int_{\Gamma_j} \frac{1}{\sqrt{\big||z-x|-|s_j-x|\big|}} \,  \mathrm{d}s(x)\\
& \leq \sum_{j=1}^{M} \int_{\partial \Omega} \frac{1}{\sqrt{\big||z-x|-|s_j-x|\big|}} \,  \mathrm{d}s(x).
\end{align*}
Therefore, Lemma \ref{Lemma_integral} implies
$$ I(z) < +\infty .$$
\end{proof}

\begin{Theorem}\label{thm_line_3D}
Let $\Omega \subset \mathbb{R}^3$ be a spherical region with the boundary $\partial \Omega$, and $u(x, t)$ be the solution to the wave equation \eqref{equation_curve}. The spatial support of the source is a line segment $L\subset \Omega$. The sampling region $D \subset \Omega$ is chosen such that $L \subset D$ and $D \cap \partial \Omega = \emptyset$. Then the indicator function $I(z)$ defined by \eqref{indicator} satisfies

$$ I(z) 
\begin{cases}
= +\infty,  & z \in L,  \\
< +\infty,  & z \in D \setminus L.
\end{cases} $$
\end{Theorem}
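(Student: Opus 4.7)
The argument follows the two-case template of Theorem \ref{thm_effectiveness_multiple_points}, but the one-dimensionality of $L$ demands a finer analysis: for an interior point $z\in L$ the control area $\Gamma_{z}\subset\partial\Omega$ is only a circle and has zero surface measure, and for $z\notin L$ the nearest-point map $p(x):=\argmin_{y\in L}|y-x|$ itself varies with $x$. After a rigid motion and rescaling I may assume $\Omega$ is the unit ball and $L=\{\gamma(\tau):\tau\in[0,\ell]\}$ with $\gamma(\tau)=(0,0,\tau)$; set $\tau(x):=x_{3}$ so that $p(x)=(0,0,\tau(x))$ whenever $\tau(x)\in[0,\ell]$. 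Expanding using $(x-p(x))\perp(0,0,1)$ on this subset yields the key identity
\begin{equation*}
|z-x|^{2}-|p(x)-x|^{2}=(\tau_{0}-\tau(x))^{2}\quad\text{when}\quad z=(0,0,\tau_{0})\in L,
\end{equation*}
which drives the first case.

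For $z\in L$ interior, I would localize near the latitude $\{\tau(x)=\tau_{0}\}$ on $\partial\Omega$. Since $|z-x|+|p(x)-x|\le 4$ is bounded, the identity factors into $|z-x|-|p(x)-x|\le C(\tau(x)-\tau_{0})^{2}$, so the integrand in \eqref{indicator} is bounded below by $C'|\tau(x)-\tau_{0}|^{-1}$. Using the parametrization $x=(\sqrt{1-\tau^{2}}\cos\theta,\sqrt{1-\tau^{2}}\sin\theta,\tau)$ with area element $dS=d\tau\,d\theta$ reduces the local contribution to a divergent multiple of $\int|\tau-\tau_{0}|^{-1}d\tau=+\infty$. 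The endpoint case $\tau_{0}\in\{0,\ell\}$ is immediate, because $\{x\in\partial\Omega:p(x)=z\}$ is then a spherical cap of positive area on which the integrand is formally infinite.

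For $z\in D\setminus L$, partition $\partial\Omega=\Omega_{0}\cup\Omega_{\ell}\cup\Omega_{\mathrm{mid}}$ according to whether $p(x)$ equals $\gamma(0)$, $\gamma(\ell)$, or an interior point of $L$. On $\Omega_{0}$ and $\Omega_{\ell}$, $p(x)$ is constant, so the integrand matches the one in Lemma \ref{Lemma_integral} with $y\ne z$ (since $z\notin L$), and the lemma yields finite contributions. On $\Omega_{\mathrm{mid}}$, by rotation about the $x_{3}$-axis I may take $z=(r,0,z_{3})$, and a direct calculation gives
\begin{equation*}
|z-x|^{2}-|p(x)-x|^{2}=A(\tau)-B(\tau)\cos\theta,\quad A(\tau):=r^{2}+(z_{3}-\tau)^{2},\;B(\tau):=2r\sqrt{1-\tau^{2}}.
\end{equation*}
Combined with $|z-x|+|p(x)-x|$ bounded above and away from zero (the latter from $L,z\subset D$ with $D\cap\partial\Omega=\emptyset$), this reduces the contribution to a constant multiple of $\int_{0}^{\ell}\int_{0}^{2\pi}|A(\tau)-B(\tau)\cos\theta|^{-1/2}\,d\theta\,d\tau$.

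The main obstacle is this last double integral. When $A(\tau)\ne B(\tau)$ the inner $\theta$-integral is finite (bounded when $A>B$, with integrable $|\theta-\theta_{*}|^{-1/2}$ singularities when $A<B$), but it diverges logarithmically as $A/B\to 1$; near transition values $\tau_{0}$ with $A(\tau_{0})=B(\tau_{0})$ it therefore behaves like $O\bigl(\log|\tau-\tau_{0}|^{-1}\bigr)$. I would finish by observing that $A-B$ is real-analytic and not identically zero on $(-1,1)$, so its zeros are isolated and of finite order, making the logarithmic blowup integrable over $\tau\in[0,\ell]$ and yielding $I(z)<+\infty$.
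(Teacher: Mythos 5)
Your proof follows the same three-case skeleton as the paper (endpoints of $L$ via a positive-area control region; interior points of $L$ via quadratic vanishing of $E_z(x)=|z-x|-\min_{y\in L}|y-x|$ transverse to the latitude circle; $z\notin L$ via the partition of $\partial\Omega$ into the two endpoint regions, handled by Lemma \ref{Lemma_integral}, plus the middle band). Where you genuinely diverge is in the middle band for $z\notin L$, and your route is the stronger one. The paper Taylor-expands $E_z$ in $\theta$ at each zero $\theta_1(\varphi)$, shows $\partial_\theta E_z\neq 0$ except at finitely many $\varphi$, and at those exceptional points shows $\partial^2_\theta E_z\neq 0$; but it then concludes $I_3(z)<+\infty$ without checking that the inner $\theta$-integral, which blows up as $\varphi$ approaches an exceptional value, remains integrable in $\varphi$. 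Your reduction to $\iint|A(\tau)-B(\tau)\cos\theta|^{-1/2}\,d\theta\,d\tau$, the $O(\log|A-B|^{-1})$ estimate for the inner integral near the transition $A=B$, and the observation that $A-B$ is real-analytic and $\not\equiv 0$ (so its zeros are isolated and of finite order, making the logarithm integrable) supply exactly the missing quantitative step. Likewise, in case (2) your exact identity $|z-x|^2-|p(x)-x|^2=(\tau(x)-\tau_0)^2$ replaces the paper's Taylor expansion and gives the $|\tau-\tau_0|^{-1}$ lower bound cleanly via the hat-box area element $dS=d\tau\,d\theta$.

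Two fixable slips. First, your normalization is too strong: a rotation about the centre of $\Omega$ can make $L$ parallel to the $x_3$-axis (as the paper assumes, with endpoints $(a,b,c_1)$, $(a,b,c_2)$), but not lie \emph{on} the axis unless its supporting line passes through the centre, and a translation would move the centre of the ball and invalidate $dS=d\tau\,d\theta$. With $(a,b)\neq(0,0)$ your middle-band identity survives with $A(\tau)=z_1^2+z_2^2-a^2-b^2+(z_3-\tau)^2$, $B(\tau)=2\sqrt{(z_1-a)^2+(z_2-b)^2}\,\sqrt{1-\tau^2}$ and a phase shift $\cos(\theta-\theta_z)$; the transition set becomes $\{A=\pm B\}$, still the zero set of a nonzero analytic function, so the argument is unaffected, but you should state it this way. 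Second, in case (2) the bound $E_z\le C(\tau(x)-\tau_0)^2$ requires a \emph{lower} bound on $|z-x|+|p(x)-x|$ (e.g.\ $\ge\operatorname{dist}(z,\partial\Omega)>0$, available since $z\in D$ and $D\cap\partial\Omega=\emptyset$), not the upper bound $\le 4$ you cite; the inequality you need is true, but for the opposite reason.
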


\begin{proof}
To avoid complicated expressions, assume that the radius of $\Omega$ is $1$. The parametric coordinates \eqref{parametric_coordinates} 
are used to parameterize points on $\partial \Omega$. Without loss of generality, assume that the endpoints of $L$ are $A(a, b, c_1)$ and $ B(a, b, c_2)$, in which $\sqrt{a^2+b^2+c_1^2}<1$, $\sqrt{a^2+b^2+c_2^2}<1$, $c_1<c_2$. In the following, we will discuss the values of the indicator function in three cases.

(1) For $z \in \{A, B\} $, without loss of generality, assume that $z=A$. The perpendicular plane to $L$ passing through the endpoint $A$ is $\varphi_0=\arccos c_1$. Then the control area of $z=A$ is
$$ \Gamma_1:=\{(\sin \varphi \cos \theta,  \sin \varphi \sin \theta,  \cos \varphi),\,\varphi \in [\varphi_0,  \pi], ~\theta \in [0,  2\pi]\}.$$
Since the area $A(\Gamma_1)>0$, we have
$$I(z) = + \infty.$$

(2) For $z\in L\setminus \{A,B\}$, assume that $z=(a, b, h),\, h \in (c_1, c_2)$. The indicator function \eqref{indicator} can be represented as
\begin{equation}\label{Indicator_line}
I(z) = \int_{\partial \Omega} \frac{1}{\sqrt{ \Big||z-x| - \min\limits_{y\in L} |y-x| \Big| }} \mathrm{d}s(x). 
\end{equation}
Set 
$$E_z(\varphi,  \theta) = E_z(x) := |z-x| - \min\limits_{y\in L} |y-x|. $$
A direct computation implies $E_z (\varphi,  \theta) = 0 $ when $\varphi = \varphi_0 = \arccos h$. 

For $\varphi \in (\varphi_0 - \epsilon,  \varphi_0 + \epsilon)$ with a sufficiently small $\epsilon>0$ and $ x = (\sin \varphi \cos \theta,  \sin \varphi \sin \theta,  \cos \varphi) $, we have $\cos \varphi \in (c_1,  c_2) $
and
$$\argmin \limits_{y\in L} |y-x| = (a,  b,  \cos \varphi) =: y_x.$$
Then we have
$$|z-x| = \sqrt{(a-\sin\varphi \cos\theta)^2 + (b-\sin\varphi \sin\theta)^2 + (h-\cos\varphi)^2}, $$
and
$$\min\limits_{y\in L} |y-x| = |y_x - x| = \sqrt{(a-\sin\varphi \cos\theta)^2 + (b-\sin\varphi \sin\theta)^2}.$$
For arbitrary $\theta\in [0,2\pi]$, the Fourier expansion of $E_z(\varphi, \theta)$ with respect to $\varphi$ at  $\varphi = \varphi_0$ is
\begin{equation*}
E_z(\varphi, \theta) = E_z(\varphi_0, \theta) + \frac{\partial E_z}{\partial \varphi}(\varphi_0,  \theta) \cdot (\varphi - \varphi_0) + \frac{\partial^2 E_z}{\partial \varphi^2}(\varphi_0,  \theta) \cdot (\varphi - \varphi_0)^2 + O\left((\varphi - \varphi_0)^3\right).  
\end{equation*}
The first-order derivative is 
\begin{align*}
\frac{\partial E_z}{\partial \varphi}(\varphi,\theta) = &\frac{-a \cos\varphi \cos\theta - b \cos\varphi \sin\theta + h \sin\varphi}{\sqrt{a^2 + b^2 - 2a \sin\varphi \cos\theta - 2b \sin\varphi \sin\theta + 1 + h^2  - 2h \cos\varphi}}\\
&- \frac{-a \cos\varphi \cos\theta - b \cos\varphi \sin\theta + \sin\varphi \cos\varphi}{\sqrt{a^2 + b^2 - 2a \sin\varphi \cos\theta - 2b \sin\varphi \sin\theta + \sin^2\varphi}}. 
\end{align*}
Then we have
$$\frac{\partial E_z}{\partial \varphi}(\varphi_0,\theta) = \frac{(h-\cos\varphi_0) \sin\varphi_0}{\sqrt{(a-\sin\varphi_0 \cos\theta)^2 + (b-\sin\varphi_0 \cos\theta)^2}} = 0,$$
which implies
$$E_z(\varphi, \theta) = \frac{\partial^2 E_z}{\partial \varphi^2}(\varphi_0,  \theta) \cdot (\varphi - \varphi_0)^2 + O\left((\varphi - \varphi_0)^3\right). $$
Since $\epsilon>0$ is sufficiently small, there exists a constant $A >0$ such that 
$$\left| E_z(\varphi, \theta) \right| \leq A (\varphi - \varphi_0)^2.$$
Then the indicator function satisfies
\begin{align*}
I(z) &\geq \int_{0}^{2\pi} \mathrm{d}\theta \int_{\varphi_0 - \epsilon}^{\varphi_0+\epsilon} \frac{1}{\sqrt{\left| E_z(\varphi, \theta) \right|}} \sin \varphi \,  \mathrm{d}\varphi  \\
&\geq \int_{0}^{2\pi} \mathrm{d}\theta \int_{\varphi_0 - \epsilon}^{\varphi_0+\epsilon} \frac{1}{\sqrt{A} \cdot|\varphi - \varphi_0|} \sin \varphi \,  \mathrm{d}\varphi \\
&= +\infty.
\end{align*}

(3) For~$z = (z_1, z_2, z_3)\in D \setminus L$, notice that the integral kernel of the indicator function ~\eqref{Indicator_line}~ exhibits singularity when~$E_z (x) = E_z (\varphi, \theta) = 0$. Divide $\partial \Omega$ into three parts according to the range of values of ~$x_3$:
$$\Gamma_1=\{x\in\partial \Omega:~x_3 \leq c_1\},\,\Gamma_2=\{x\in\partial \Omega:~x_3 \geq c_2\},\, \Gamma_3=\{x\in\partial \Omega:~c_1 < x_3 < c_2\}.$$ 
Denote
\begin{equation*}
I_{j}(z) = \int_{\Gamma_j} \frac{1}{\sqrt{\Big||z-x| - \min\limits_{y\in L} |y-x|\Big|}} \mathrm{d}s(x),  \quad j=1, 2, 3.
\end{equation*}

When~$x\in\Gamma_1$, we have
$$\argmin \limits_{y\in L} |y-x| = A. $$
Similar to the proof of Theorem~\ref{thm_effectiveness_multiple_points}, we can prove~$I_{1}(z)<+\infty$. Similarly, we can conclude that~$I_{2}(z)<+\infty$.

When~$x\in\Gamma_3$, recall that 
$$\argmin \limits_{y\in L} |y-x| = (a,  b,  x_3) = (a,  b,  \cos \varphi)=:y_x. $$
Then we can get
$$|z-x| = \sqrt{(z_1-\sin\varphi \cos\theta)^2 + (z_2-\sin\varphi \sin\theta)^2 + (z_3-\cos\varphi)^2} $$
and
$$\min\limits_{y\in L} |y-x| = |y_x - x| = \sqrt{(a-\sin\varphi \cos\theta)^2 + (b-\sin\varphi \sin\theta)^2}.$$
Then~$E_z (\varphi, \theta) = 0$ when
\begin{equation}\label{E0_condition}
(z_1-a)\cos\theta+(z_2-b)\sin\theta = \frac{z_1^2+z_2^2+z_3^2-a^2-b^2-2z_3\cos\varphi+\cos^2 \varphi}{2\sin\varphi}.
\end{equation}
For a fixed $\varphi \in (\arccos{c_2}, \arccos{c_1})$, there exist at most two $\theta\in[0,2\pi)$ that satisfy the above equation. 

Note that
\begin{equation*}
\begin{split}
I_{3}(z) &= \int_{\Gamma_3} \frac{1}{\sqrt{\Big||z-x| - \min\limits_{y\in L} |y-x|\Big|}} \mathrm{d}s(x)\\
&= \int_{\arccos{c_2}}^{\arccos{c_1}} \sin \varphi \int_{0}^{2\pi} \frac{1}{\sqrt{|E_z(\varphi,  \theta)|}}  \mathrm{d}\theta \mathrm{d}\varphi.
\end{split}
\end{equation*}
For $\varphi \in (\arccos{c_2}, \arccos{c_1})$, consider the estimation of the inner integral 
\begin{equation}\label{inner_integral}
\int_{0}^{2\pi} \dfrac{1}{\sqrt{|E_z(\varphi, \theta)|}} \mathrm{d}\theta.
\end{equation}
Treat~$\varphi$ as a constant, then $E_z (\varphi, \theta)$ equals zero at most at two points $\theta = \theta_1(\varphi)$ and~$\theta = \theta_2(\varphi)$. Without loss of generality, assuming that $E_z (\varphi, \theta)=0$ only when~$\theta = \theta_1(\varphi)$, consider the Taylor expansion of $E_z (\varphi, \theta)$ with respect to $\theta$ at the point $\theta_1(\varphi)$:
\begin{align*}
E_z (\varphi, \theta) = & E_z (\varphi, \theta_1(\varphi)) +\frac{\partial E_z }{\partial \theta} (\varphi, \theta_1(\varphi))\cdot (\theta - \theta_1(\varphi) ) \\
& +\frac{\partial^2 E_z }{\partial \theta^2} (\varphi, \theta_1(\varphi))\cdot (\theta - \theta_1(\varphi) )^2 + O((\theta - \theta_1(\varphi) )^3). 
\end{align*}
The first order derivative is
\begin{align*}
\frac{\partial E_z }{\partial \theta} (\varphi, \theta)  = &\frac{\sin\varphi(z_1 \sin\theta-z_2 \cos\theta)}{\sqrt{z_1^2 + z_2^2 + z_3^2 - 2 \sin\varphi (z_1 \cos \theta+z_2 \sin \theta) - 2z_3 \cos\varphi + 1}}\\
&- \frac{\sin\varphi(a \sin\theta-b \cos\theta)}{\sqrt{a^2 + b^2 - 2\sin\varphi (a  \cos\theta + b \sin\theta) + \sin^2\varphi}}.
\end{align*}
Then we have
$$\frac{\partial E_z }{\partial \theta} (\varphi, \theta_1(\varphi)) = \frac{\sin\varphi((z_1-a)\sin\theta_1(\varphi)-(z_2-b)\cos\theta_1(\varphi))}{\sqrt{(a-\sin\varphi_0 \cos\theta_1(\varphi))^2 + (b-\sin\varphi_0 \cos\theta_1(\varphi))^2}}.$$
Notice that~$\dfrac{\partial E_z }{\partial \theta} (\varphi, \theta_1(\varphi)) = 0$ when
\begin{equation}\label{E1_condition}
(z_1-a)\sin\theta_1(\varphi)=(z_2-b)\cos\theta_1(\varphi).
\end{equation}
The value of the inner integral \eqref{inner_integral} is infinite when both~\eqref{E0_condition} and~\eqref{E1_condition} are satisfied, while in other cases the inner integral is finite. Since there are a finite number of solutions $(\varphi,\theta_1(\varphi))$ that satisfy both \eqref{E0_condition} and~\eqref{E1_condition}, the value of the inner integral is infinite at most at a finite number of points.

At these limited number of points~$(\varphi,\theta_1(\varphi))$, we have $E_z  (\varphi, \theta_1(\varphi)) = 0$ and $\dfrac{\partial E_z }{\partial \theta} (\varphi, \theta_1(\varphi)) = 0$. Denoting
$$d_{\varphi, \theta_1(\varphi)} = |z-x(\varphi, \theta_1(\varphi))|=\min\limits_{y\in L} |y-x(\varphi, \theta_1(\varphi))|, $$
combining with \eqref{E1_condition}, the second order derivative is
\begin{align*}
\frac{\partial^2 E_z }{\partial \theta^2} (\varphi, \theta_1(\varphi)) = & \frac{(z_1 \cos \theta_1(\varphi) +z_2 \sin \theta_1(\varphi))\cdot \sin\varphi}{d_{\varphi, \theta_1(\varphi)}} - \frac{ (z_1 \sin \theta_1(\varphi) -z_2 \cos \theta_1(\varphi))^2\cdot \sin^2\varphi}{d_{\varphi, \theta_1(\varphi)}^3}\\
& -\frac{(a \cos \theta_1(\varphi) +b \sin \theta_1(\varphi))\cdot \sin\varphi}{d_{\varphi, \theta_1(\varphi)}} + \frac{ (a \sin \theta_1(\varphi) -b \cos \theta_1(\varphi))^2\cdot \sin^2\varphi}{d_{\varphi, \theta_1(\varphi)}^3}\\
= & \frac{\left((z_1 -a) \cos \theta_1(\varphi) +(z_2-b) \sin \theta_1(\varphi)\right)\cdot \sin\varphi}{d_{\varphi, \theta_1(\varphi)}}.
\end{align*}
Thus, in order for~$\dfrac{\partial^2 E_z }{\partial \theta^2}(\varphi, \theta_1(\varphi))$ to remain zero, combining with \eqref{E1_condition}, we have~$z_1=a$ and~$z_2=b$. However, when~$z_1=a$, ~$z_2=b$ and~$z\notin L$, $E_z(x) = 0$ has no solution on $\Gamma_2$, which is in contradiction to~$E_z  (\varphi, \theta_1(\varphi)) = 0$. Then we have $\dfrac{\partial^2 E_z }{\partial \theta^2}(\varphi, \theta_1(\varphi)) \neq 0$, which implies~$I_{3}(z)<+\infty$.

Therefore, for~$z \in D \setminus L$, it can be concluded that
$$I(z) = I_1(z) + I_2(z) + I_3(z)  < +\infty.$$
\end{proof}

\begin{Theorem}\label{thm_plan_3D}
Let $\Omega \subset \mathbb{R}^3$ be a spherical region with the boundary $\partial \Omega$, and $u(x, t)$ be the solution to the wave equation \eqref{equation_surface}. The spatial support of the source is a convex polygonal plane region $S\subset \Omega$. Denote by $\partial S$ the edges of the polygon $S$. The sampling region $D \subset \Omega$ is chosen such that $S \subset D$ and $D \cap \partial \Omega = \emptyset$.  Then the indicator function $I(z)$ defined by \eqref{indicator} satisfies
$$ I(z) 
\begin{cases}
= +\infty,  & z \in \partial S,  \\
< +\infty,  & z \in D \setminus \partial S.
\end{cases} $$
\end{Theorem}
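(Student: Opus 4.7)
The plan is to parametrize $\partial\Omega$ by the spherical coordinates \eqref{parametric_coordinates}, place the plane of $S$ as $\{x_3 = 0\}$ without loss of generality, and decompose $\partial\Omega$ according to the facet of $S$ containing the nearest point $y_x := \argmin_{y \in S}|y-x|$: write $\Gamma_v$ for $\{x : y_x = v\}$ at each vertex $v$ of $S$, $\Gamma_e$ for $\{x : y_x \in \mathrm{int}(e)\}$ at each edge $e$, and $\Gamma_{\mathrm{int}}$ for the set where $y_x$ is the orthogonal projection of $x$ into $\mathrm{int}(S)$. Accordingly, split
$$I(z) = \sum_v I_{\Gamma_v}(z) + \sum_e I_{\Gamma_e}(z) + I_{\Gamma_{\mathrm{int}}}(z),$$
and analyze each summand according to where $z$ sits relative to $S$.

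For $z \in \partial S$, I first handle the case in which $z$ coincides with a vertex $v$ of $S$: the 3D normal cone at $v$ is the product of an angular wedge in the plane of $S$ with the full normal direction, so $\Gamma_v$ has positive surface area, and \eqref{control_property} forces the integrand to be $+\infty$ on $\Gamma_v$, exactly as in the proof of Theorem~\ref{thm_effectiveness_multiple_points}. When $z$ lies on the interior of an edge $e^*$, choose coordinates so that $e^* \subset \{x_2 = x_3 = 0\}$, $S \subset \{x_2 \geq 0\}$, and $z = (z_1, 0, 0)$. For $x \in \Gamma_{e^*}$ with $x_2 < 0$ and $x_1$ in the parameter range of $e^*$, we have $y_x = (x_1, 0, 0)$ and
$$E_z(x) := |z-x| - |y_x-x| = \frac{(x_1 - z_1)^2}{|z - x| + |y_x - x|}.$$
Parametrizing a hemispherical slice of $\partial\Omega$ by $(x_1, x_3)$ (with $x_2$ bounded away from zero) then reduces $I_{\Gamma_{e^*}}(z)$ to a double integral whose integrand behaves like $C/|x_1 - z_1|$, which diverges logarithmically in $x_1$; hence $I(z) = +\infty$.

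For $z \in D \setminus \partial S$, I bound each piece separately. On $\Gamma_v$ the problem reduces to a single point source at $v \neq z$, so Lemma~\ref{Lemma_integral} gives $I_{\Gamma_v}(z) < +\infty$. On $\Gamma_e$ the nearest-point map takes values in a line segment with $z$ off the supporting line, so the argument of Case (3) in the proof of Theorem~\ref{thm_line_3D}---zero-set analysis of $E_z$ in $(\varphi,\theta)$ together with non-degeneracy of the transverse Hessian---yields $I_{\Gamma_e}(z) < +\infty$. On $\Gamma_{\mathrm{int}}$ we have $\min_{y\in S}|y-x| = |x_3|$; if $z = (z_1,z_2,0)\in\mathrm{int}(S)$, then $E_z$ vanishes only at the at most two points where the normal line to $\{x_3 = 0\}$ through $z$ meets $\partial\Omega$, and the identity
$$E_z(x) = \frac{(x_1-z_1)^2 + (x_2-z_2)^2}{|z-x|+|x_3|}$$
is quadratic in the transverse planar distance $r$, so with the sphere area element $ds \sim r\,dr\,d\theta$ near each pole the integral converges; if $z \in D\setminus S$ with $z_3 \neq 0$, the equation $E_z = 0$ on $\Gamma_{\mathrm{int}}$ cuts out a one-dimensional curve and a Hessian analysis mirroring Theorem~\ref{thm_line_3D} shows the singularity is integrable; if $z_3 = 0$ with $z \notin S$, then $E_z$ does not vanish anywhere on $\Gamma_{\mathrm{int}}$.

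The main obstacle will be the analysis on $\Gamma_{\mathrm{int}}$ when $z \in D\setminus S$ with $z_3 \neq 0$: the zero locus of $E_z$ is a one-dimensional curve along which one must verify that the second derivative of $E_z$ in the transverse direction stays non-zero outside an explicitly controlled finite set, so that the residual singularities are integrated out by the surface area element. This step parallels the most technical portion of the proof of Theorem~\ref{thm_line_3D} and will carry most of the computational weight; everything else reduces either to a point-source estimate handled by Lemma~\ref{Lemma_integral} or to a line-segment estimate handled by Theorem~\ref{thm_line_3D}.
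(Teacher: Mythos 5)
Your decomposition of $\partial\Omega$ into vertex, edge and interior control regions, the case analysis on the position of $z$, and the reductions to Lemma \ref{Lemma_integral} (for $\Gamma_v$) and to the line-segment argument of Theorem \ref{thm_line_3D} (for $\Gamma_e$) are exactly the paper's strategy; the paper merely specializes to a rectangle for concreteness. Your explicit identities
$$E_z(x)=\frac{(x_1-z_1)^2}{|z-x|+|y_x-x|},\qquad E_z(x)=\frac{(x_1-z_1)^2+(x_2-z_2)^2}{|z-x|+|x_3|}$$
are a genuine simplification of the paper's corresponding steps, which proceed by Taylor expansion in $\varphi$ and verification that the first derivative vanishes while the second does not; your versions make the quadratic vanishing immediate, and hence both the divergence for $z$ on an open edge and the convergence for $z\in\mathrm{int}(S)$ via $ds\sim r\,dr\,d\theta$ near the two poles.

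The gap is in the step you defer: the finiteness of $I_{\Gamma_{\mathrm{int}}}(z)$ for $z\in D\setminus S$ off the plane of $S$. You propose ``a Hessian analysis mirroring Theorem \ref{thm_line_3D}'', i.e.\ showing that at zeros of $E_z$ the first $\theta$-derivative vanishes only on a finite exceptional set on which the second $\theta$-derivative is nonzero. This fails identically when $z$ lies on the line through the centre of the sphere perpendicular to the plane of $S$ (the paper's sub-case $z_1^2+z_2^2=0$): there $E_z$ restricted to $\Gamma_{\mathrm{int}}$ is independent of $\theta$, its zero set is a union of full latitude circles, and every $\theta$-derivative vanishes along them, so no finite exceptional set exists. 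The paper handles this branch separately by switching to the $\varphi$-derivative and showing $\partial_\varphi E_z\neq 0$ on each such circle; if it vanished one would get $2z_3\cos\varphi_1=1$ and hence $|c|\geq\sqrt{2}$, contradicting $S\subset\Omega$. Your plan does not anticipate this degenerate configuration, and it is the one place where the ``mirror Theorem \ref{thm_line_3D}'' recipe needs a genuinely different argument. Relatedly, your phrasing about keeping ``the second derivative in the transverse direction'' nonzero is backwards for this branch: along a one-dimensional zero curve, second-order transverse vanishing would give a non-integrable $1/d$ singularity over a two-dimensional neighbourhood, so what must be established there is precisely first-order transverse vanishing (a nonzero gradient), which is what the $\varphi$-derivative computation delivers.
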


\begin{proof}
Without loss of generality, assume that the radius of $\Omega$ is $1$, and the proof will be provided only for the case of a rectangular planar region
$$S=\{(y_1, y_2, c):a_1\leq y_1\leq a_2,  b_1\leq y_2\leq b_2\},$$
where $\sqrt{a_i^2+b_j^2+c^2}< 1$, $i,j=1,2$. The four vertices of the rectangular area are~$P_1=(a_1, b_1, c)$, ~$P_2=(a_2, b_1, c)$, ~$P_3=(a_2, b_2, c)$ and~$P_4=(a_1, b_2, c)$. 

(1) For~$z \in \{P_1, P_2, P_3, P_4\}$, similar to case (1) in the proof of Theorem \ref{thm_line_3D}, we can obtain
$$I(z) = + \infty.$$

(2) For~$z \in \partial S \setminus\{P_1, P_2, P_3, P_4\}$, similar to case (2) in the proof of Theorem \ref{thm_line_3D}, we can obtain
$$I(z) = + \infty.$$

(3) For~$z = (z_1, z_2, c) \in S\setminus \partial S$, we still use the parametric coordinates \eqref{parametric_coordinates} of $x\in \partial \Omega$ and set
$$E_z(\varphi,  \theta) = E_z(x) := |z-x| - \min\limits_{y\in S} |y-x|.$$
Note that when $x_1\notin (a_1,a_2)$ or $x_2\notin (b_1,b_2)$, it is easy to get that $E_z(\varphi,  \theta)\neq 0$. Denote
$$\Gamma_S:=\{x\in\partial \Omega: x_1\in (a_1,a_2), x_2\in (b_1,b_2)\}. $$
For $x\in \Gamma_S$, we have
$$|z-x| = \sqrt{(z_1-\sin\varphi \cos\theta)^2 + (z_2-\sin\varphi \sin\theta)^2 + (c-\cos\varphi)^2} $$
and
$$\min\limits_{y\in S} |y-x| = |c-\cos\varphi|.$$
A direct computation implies $E_z(\varphi,  \theta)=0$~when
\begin{equation}\label{Condition2_0}
z_1 = \sin\varphi \cos\theta,\, z_2=\sin\varphi \sin\theta.
\end{equation}
There are two points~$x\in\partial \Omega$ that satisfy the above equation. 

Assume that~$E_z(\varphi_0, \theta_0)$ equals to zero at the point $(\varphi_0, \theta_0)$. To estimate the magnitude of the indicator function, combining with \eqref{Condition2_0}, we can get the first order derivative
$$\frac{\partial E_z}{\partial \varphi} (\varphi_0, \theta_0) = 0.$$
Since $x\in \Gamma_S$, we have $\cos\varphi_0\neq 0$ and the second order derivative satisfies
$$\frac{\partial^2 E_z}{\partial \varphi^2} (\varphi_0, \theta_0) = \frac{\cos^2 \varphi_0}{|c-\cos\varphi_0|}\neq 0.$$
Similar to case (3) in the proof of Theorem \ref{thm_line_3D},  we have
$$I(z) < + \infty.$$

(4) When~$z = (z_1, z_2, z_3)  \in D \setminus S$, the indicator function \eqref{indicator} can be represented as 
\begin{equation}\label{Indicator_plan}
I(z) = \int_{\partial \Omega} \frac{1}{\sqrt{\big| E_z(x) \big|}} \mathrm{d}s(x) = \int_{\partial \Omega} \frac{1}{\sqrt{ \Big| |z-x| - \min\limits_{y\in S} |y-x| \Big|}} \mathrm{d}s(x).
\end{equation}
Divide~$\partial \Omega$~into 
$$\Gamma_j:=\{x\in\partial \Omega:~\argmin \limits_{y\in S}|y-x|=P_j\}, \quad  j=1, 2, 3, 4, $$
$$\Gamma_{ij}:=\{x\in\partial \Omega:~\argmin \limits_{y\in S}|y-x|\in P_i P_j\setminus\{P_i, P_j\}\}, \quad (i, j)=(1, 2), (2, 3), (3, 4), (4, 1)$$
and $\Gamma_S$.

Similar to the proof of Theorem \ref{thm_line_3D}, we can obtain that the integrals of~$1\big/{\sqrt{E_z(x)}}$~on~$\Gamma_j, ~j=1, 2, 3, 4$~and~$\Gamma_{ij}, ~(i, j)=(1, 2), (2, 3), (3, 4), (4, 1)$~are all finite. 

For~$x\in\Gamma_S$, we have
$$|z-x| = \sqrt{(z_1-\sin\varphi \cos\theta)^2 + (z_2-\sin\varphi \sin\theta)^2 + (z_3-\cos\varphi)^2} $$
and
$$\min\limits_{y\in S} |y-x| = |c-\cos\varphi|.$$
Then $E_z(\varphi, \theta)=0$ is equivalent to
$$z_1\cos\theta+z_2\sin\theta = -\frac{z_1^2+z_2^2+z_3^2-c^2+\sin^2\varphi-2(z_3-c)\cos\varphi}{2\sin\varphi}.$$
For a fixed~$\varphi$, there are at most finite points~$\theta\in[0,2\pi]$~that satisfy the above equation. Similar to case (3) of the proof to Theorem \ref{thm_line_3D}, considering the inner integral with the same form as~\eqref{inner_integral}, at a point~$(\varphi_0, \theta_0)$~which makes~$E_z(\varphi_0, \theta_0)=0$, the first order derivative is
$$\frac{\partial E_z}{\partial \theta}(\varphi_0, \theta_0) = \frac{(z_1 \sin \theta_0 - z_2\cos \theta_0)\cdot \sin \varphi_0}{|c - \cos\varphi_0|}.$$
The second order derivative is
$$\frac{\partial^2 E_z}{\partial \theta^2}(\varphi_0, \theta_0) = \frac{(z_1 \cos \theta_0 + z_2\sin \theta_0)\cdot \sin \varphi_0}{|c - \cos\varphi_0|}-\frac{(z_1 \sin \theta_0 - z_2\cos \theta_0)^2\cdot \sin^2 \varphi_0}{|c - \cos\varphi_0|^3}.$$
Note that $\displaystyle{\frac{\partial E_z}{\partial \theta}(\varphi_0, \theta_0)=\frac{\partial^2 E_z}{\partial \theta^2}(\varphi_0, \theta_0)=0}$ if and only if $z_1^2+z_2^2 = 0$. When $z_1^2+z_2^2 \neq 0$, similar to case (3) of the proof to Theorem \ref{thm_line_3D}, we can get
$$\int_{\Gamma_S} \frac{1}{\sqrt{\big| E_z(x) \big|}} \mathrm{d}s(x) < + \infty.$$
When~$z_1^2+z_2^2 = 0$, $E_z(\varphi, \theta)=0$ is equivalent to
\begin{equation}\label{condition_plane_4_1}
z_3^2-c^2+\sin^2\varphi-2(z_3-c)\cos\varphi = 0.
\end{equation}
There are at most finite points~$\varphi\in[0, \pi]$~that satisfy the above equation.  It should be noted that if $\varphi_1$ satisfies \eqref{condition_plane_4_1}, then $E(\varphi_1, \theta) = 0$, ~$\forall \theta\in[0, 2\pi]$. The first order derivative with respect to $\varphi$ is
$$\frac{\partial E_z}{\partial \varphi}(\varphi_1, \theta) = \frac{(z_3 - c + \cos \varphi_1)\cdot \sin \varphi_1}{|c - \cos\varphi_1|}.$$
If~$\sin \varphi_1 = 0$,  we have~$x=(0, 0, \cos\varphi_1)$ and
$$\argmin\limits_{y\in S} |x-y| = (0, 0, c).$$
Moreover, since~$E_z(\varphi_1, \theta)=0$ and $z_1^2+z_2^2 = 0$,~we have~$z=(0, 0, z_3)=(0, 0, c)\in S$,  which contradicts~$z\in D \setminus S$. Then~$\dfrac{\partial E_z}{\partial \varphi}(\varphi_1, \theta) = 0$ is equivalent to
\begin{equation}\label{condition_plane_4_2}
z_3 - c + \cos \varphi_1=0.
\end{equation}
Combining~\eqref{condition_plane_4_1}~with~\eqref{condition_plane_4_2},  we have~$2z_3 \cos \varphi_1 = 1$ and
$$|c| = \left| \cos \varphi_1 + z_3 \right| = \left| \cos \varphi_1 + \frac{1}{2 \cos \varphi_1}\right| \geq \sqrt{2}, $$
which is in contradiction to~$\sqrt{a_i^2+b_j^2+c^2}< 1$. Then we have~$\dfrac{\partial E_z}{\partial \varphi}(\varphi_1, \theta) \neq 0$ and
$$\int_{\Gamma_S} \frac{1}{\sqrt{\big| E_z(x) \big|}} \mathrm{d}s(x) < + \infty.$$

In conclusion, when $z \in D \setminus S$, we have
$$I(z) < + \infty.$$
\end{proof}

\begin{Theorem}\label{thm_polyhedron_3D}
Let $\Omega \subset \mathbb{R}^3$ be a spherical region with the boundary $\partial \Omega$, and $u(x, t)$ be the solution to the wave equation \eqref{equation_block}. The spatial support of the source is a convex polyhedral region $B$. Denote by $L_B$ the set of its arris. The sampling region $D \subset \Omega$ is chosen such that $B \subset D$ and $D \cap \partial \Omega = \emptyset$.  Then the indicator function $I(z)$ defined by \eqref{indicator} satisfies
$$ I(z) 
\begin{cases}
= +\infty,  & z \in L_B,  \\
< +\infty,  & z \in D \setminus L_B.
\end{cases} $$
\end{Theorem}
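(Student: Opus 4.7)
The plan is to mirror the case analysis in the proof of Theorem \ref{thm_plan_3D}, partitioning according to where $z$ sits relative to the polyhedral decomposition of $\bar{B}$ into vertices, open arrises, open faces, and interior.

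For $z \in L_B$ I would prove $I(z) = +\infty$ in two subcases. If $z$ is a vertex of $B$, the outward normal cone to $B$ at $z$ is three-dimensional in $\mathbb{R}^3$, so its intersection with $\partial\Omega$ is a spherical patch of positive surface area on which $E_z(x) \equiv 0$; the integrand is then identically $+\infty$ on a positive-measure set, settling the case exactly as in case (1) of Theorems \ref{thm_line_3D} and \ref{thm_plan_3D}. If $z$ lies in the relative interior of an arris $e$ with adjacent faces $F_1, F_2$, the outward normal cone to $B$ at $z$ is a two-dimensional wedge spanned by the outward normals of $F_1$ and $F_2$; it meets $\partial\Omega$ in a smooth arc $\gamma_z$ on which $E_z$ vanishes identically. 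In local coordinates $(s,\rho)$ adapted to $\gamma_z$ (arclength along $\gamma_z$ and a transverse coordinate in $\partial\Omega$), the envelope identity $\nabla_x |y_x - x| = \nabla_x |z - x|$ at every $x \in \gamma_z$, which holds because $y_x = z$ there, forces the first-order transverse derivative of $E_z$ to vanish while the second-order transverse derivative is nonzero, exactly as in case (2) of Theorem \ref{thm_line_3D}. Hence $|E_z| = O(\rho^2)$ near $\gamma_z$, and the transverse integral of $1/\sqrt{|E_z|}$ blows up like $\int d\rho/|\rho|$.

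For $z \notin L_B$ I would decompose
$$\partial\Omega \;=\; \bigcup_{k} \Gamma_{V_k} \;\cup\; \bigcup_{\ell} \Gamma_{e_\ell} \;\cup\; \bigcup_{m} \Gamma_{F_m},$$
where $V_k$, $e_\ell$, $F_m$ are the vertices, open arrises, and open faces of $B$, and $\Gamma_\bullet$ collects those $x \in \partial\Omega$ whose nearest point in $B$ lies in the indicated cell. On each $\Gamma_{V_k}$ the situation matches Theorem \ref{thm_effectiveness_multiple_points} with a single source point $V_k$, and because $z \neq V_k$ Lemma \ref{Lemma_integral} yields a finite contribution. On each $\Gamma_{e_\ell}$ the nearest-point map sends $x$ to the orthogonal foot of $x$ on the line of $e_\ell$, so the integral is exactly the one treated in case (3) of Theorem \ref{thm_line_3D}, and $z \notin e_\ell$ produces a finite value. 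On each $\Gamma_{F_m}$ the nearest-point map is the orthogonal projection onto the plane of $F_m$, so the integral reduces to cases (3) or (4) of Theorem \ref{thm_plan_3D}; when $z$ happens to lie in the relative interior of $F_m$ itself there is a single isolated zero of $E_z$ on $\Gamma_{F_m}$ at the foot of the outward normal from $z$, and the Hessian of $E_z$ in $(\varphi,\theta)$ is nondegenerate there, producing an integrable $1/\sqrt{|E_z|}$ singularity as in that theorem. Summing the finitely many bounded contributions gives $I(z) < +\infty$.

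The main technical obstacle will be transferring the derivative computations of $E_z$ from the canonical coordinates used in the earlier theorems, where the segment or planar patch was aligned with the $x_3$-axis or the plane $x_3 = c$, to a generic arris or face of $B$. I would handle this by a rigid motion of $\mathbb{R}^3$: both the integrand $1/\sqrt{|E_z(x)|}$ and the surface measure on $\partial\Omega$ are invariant under rotations and translations, so after placing the relevant cell in the standard orientation the calculations of Theorems \ref{thm_line_3D} and \ref{thm_plan_3D} apply verbatim. Once this reduction is recorded, the proof becomes an orchestration of finitely many applications of Theorems \ref{thm_effectiveness_multiple_points}, \ref{thm_line_3D}, and \ref{thm_plan_3D}, assembled into the stated dichotomy.
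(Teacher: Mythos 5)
Your proposal is correct and follows exactly the route the paper takes: the paper's own proof of this theorem is only the remark that it ``is based on Theorems \ref{thm_effectiveness_multiple_points} to \ref{thm_plan_3D}'' with details omitted, and your decomposition of $\partial\Omega$ into nearest-point cells over the vertices, arrises, and faces of $B$, together with the rigid-motion reduction to the canonical coordinates of the earlier theorems, is precisely the intended elaboration. No substantive difference to report.
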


\begin{proof}
The proof of this theorem is based on Theorems~\ref{thm_effectiveness_multiple_points} to \ref{thm_plan_3D}. Apart from some details, the proof is similar to that of the above theorems, and the details will not be elaborated here.
\end{proof}

\begin{Remark}
When the source distribution is known to be within a certain cross section $\mathbb{P}$ of the three-dimensional space, we only need to measure the wave field data on a circle $\partial \Omega \cap \mathbb{P}$ instead of the entire spherical surface~$\partial \Omega$. The proof of the effectiveness of the algorithm with the data on $\partial \Omega \cap \mathbb{P}$ is similar to the analysis above. 
\end{Remark}

In numerical experiments, to better demonstrate the results, some numerical examples are conducted on a certain two-dimensional cross section.

\begin{Remark}
Theorem \ref{thm_plan_3D} only proves that the algorithm in this paper can invert the edges of convex polygonal planar regions. Nevertheless, in the numerical experiments, we achieved good results in inverting the source terms within a planar convex region with smooth boundaries. 
\end{Remark}

\section{Numerical examples}
In this section,  we consider the numerical implementation of the proposed algorithm. The radiated field is collected for $ t \in [0,  T] $,  where $ T $ is the terminal time. The time discretization is
$$ t_k = k \frac{T}{N_T},  \quad k =0, 1,   \ldots,  N_T,  $$
where $ N_T \in \mathbb{N}^* $. Random noise is added to the data with
$$ u_\epsilon =  u+\epsilon r u_{max},  $$
where $ \epsilon > 0 $ is the noise level, $ r $ are uniformly distributed random numbers in $[-1,  1]$, and $u_{max}$ is the maximum value of $|u|$.

In all the experiments, we choose $ c = 1 $, $ T = 15 $ and $N_T = 32, 768$. The signal function $ \lambda(t) $ is chosen as
$$ \lambda(t) =
\begin{cases} 
0,  & t < 0,  \\
 \mathrm{e}^{-0.01(t-3)^2} \sin t,   & t \geq 0. 
\end{cases} $$

In addition to the indicator function $I(z_l)$ in Algrithm 1, an updated indicator function 
\begin{equation*}
 \widetilde{I}(z_l) = \sum_{i=1}^{N_x}\left|\frac{1}{ |z_l-x_i|  - c(F(u(x_i, \cdot))-F(\lambda))}\right|
\end{equation*}
is also considered in the numerical examples. Notably, although the denominator of the integral kernel of the indicator function involves a square root, which is necessary for the theoretical analysis, numerical experiments have revealed that omitting the square root yields clearer results in almost all test cases. When computing with the noisy data $u_\epsilon$, the initial arrival time is computed using
\begin{equation*}
 F(u_\epsilon(x_i, \cdot)) = \min\{t_k\in [0,T]:  |u_\epsilon(x_i, t_k)| > \epsilon  u_{max}\}.
\end{equation*}

\subsection{Reconstruction of multiple point sources in a two-dimensional cross section}

First, we consider the reconstruction of multiple point sources. The synthetic wave field data is given by the analytic solution. The sampling points are chosen as $ N_S \times N_S $ uniform discrete points in $[-2,  2] \times [-2,  2]$ in the $x_1x_2$-coordinate plane with $ N_S =200$. The sensing points are chosen as $(3.5 \cos \theta_i,  3.5 \sin \theta_i,  0)$ with  $\displaystyle {\theta_i = \frac{i}{64}\pi}$,  $i = 0,  1,  \ldots,  63$. 

\newtheoremstyle{plainexample}
  {\topsep}
  {\topsep}
  {\normalfont}
  {0pt}
  {\bfseries}
  {.}
  {5pt plus 1pt minus 1pt}
  {}

\theoremstyle{plainexample} 
\newtheorem{example}{Example} 

\begin{example}
We investigate the reconstruction of two point sources with different intensities in this example. The source points are chosen as $(0, 1, 0)$ and $(0, -1, 0)$ with the intensities $3$ and $2$, respectively. Both the indicator functions $I(z_l)$ and $\widetilde{I}(z_l)$ are used for the reconstruction. 

The experimental results are presented in Figure \ref{fig1}. As shown in the Figure \ref{fig1}(c-d), both the indicator functions can accurately reconstruct the location of the point sources with $\epsilon=5\%$. Nevertheless, the reconstruction with the indicator function $\widetilde{I}(z_l)$ is clearer as the lines $\{z\in D:|z-x_i|=c(F(u(x_i, \cdot))-F(\lambda))\}$ are more distinct in Figure \ref{fig1}(d). Therefore, the new algorithm has been adopted to compute the values of $\widetilde{I}(z_l)$ in all subsequent experiments. The robustness of the algorithm with $\widetilde{I}(z_l)$ is analyzed in Figure \ref{fig1}(e-f). 

\begin{figure}[ht]
    \centering
    
    \begin{minipage}[b]{0.4\textwidth}
        \centering
        \includegraphics[width=\textwidth]{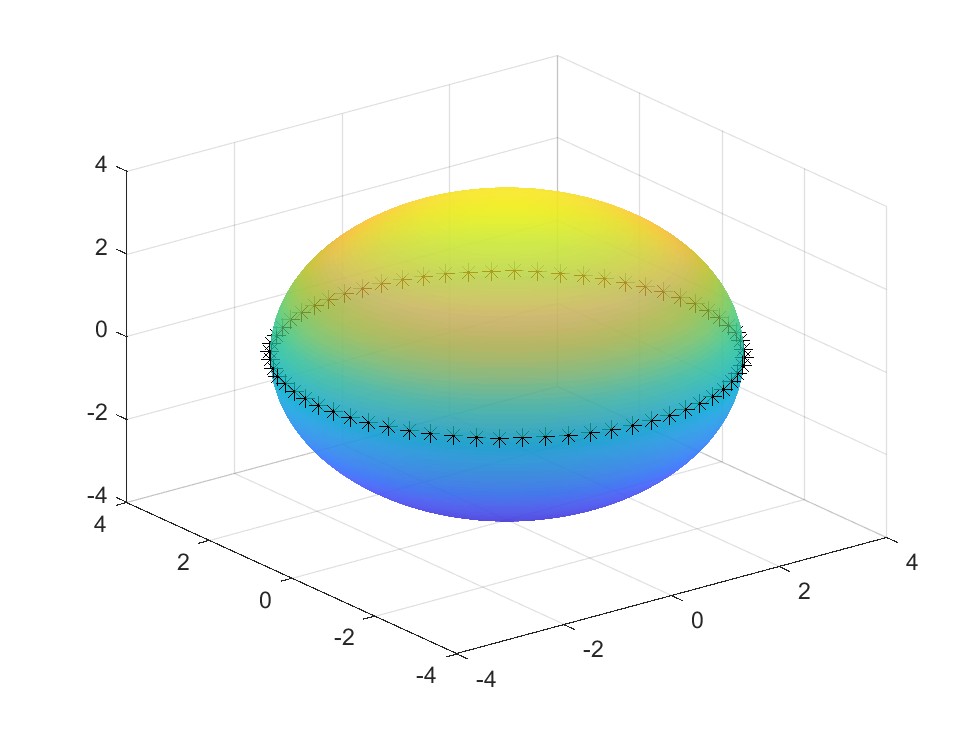} 
        \refstepcounter{subfigure}\thesubfigure 
        \label{fig:2a}
    \end{minipage}
    \hfill
    \begin{minipage}[b]{0.4\textwidth}
        \centering
        \includegraphics[width=\textwidth]{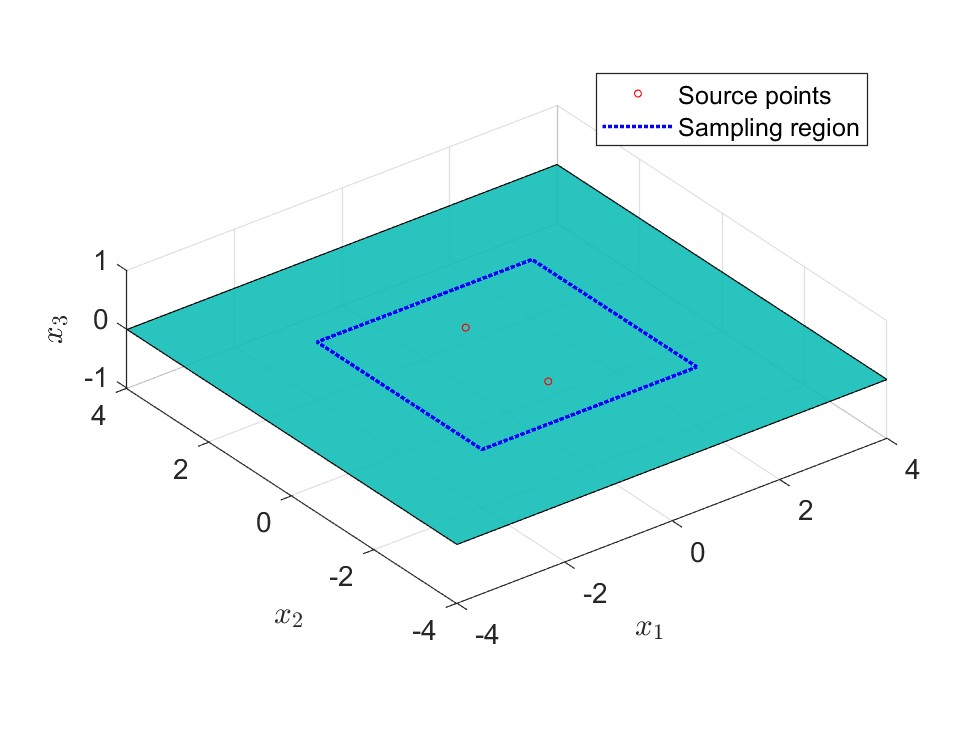} 
        \refstepcounter{subfigure}\thesubfigure 
        \label{fig:2b}
    \end{minipage}

    \vspace{\floatsep} 
    
    
    \begin{minipage}[b]{0.43\textwidth}
        \centering
        \includegraphics[width=\textwidth]{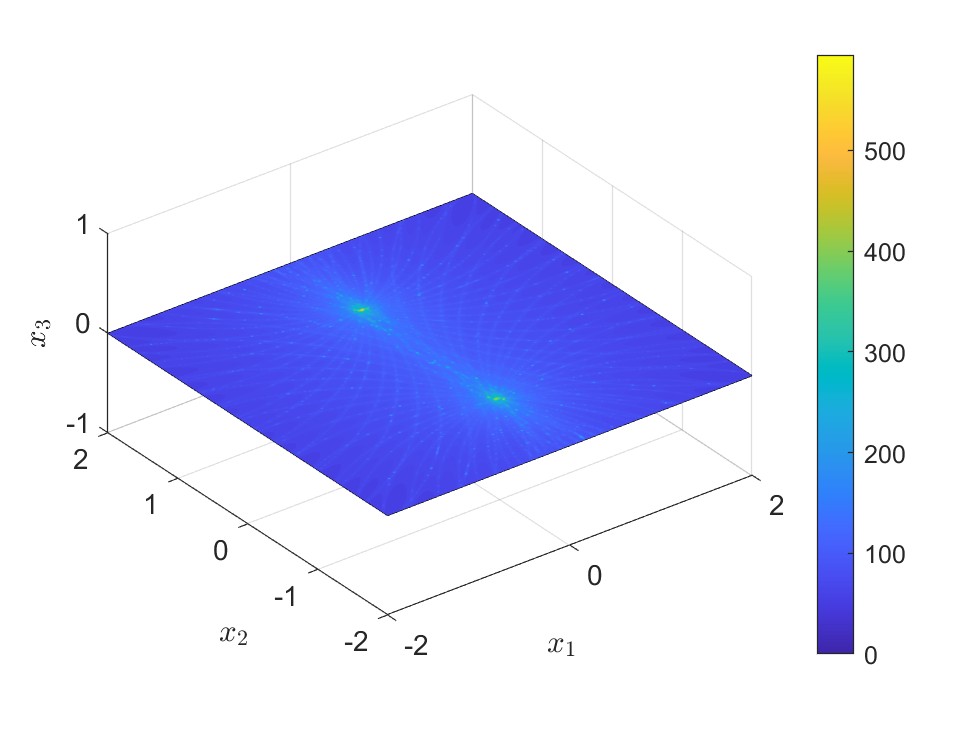} 
        \refstepcounter{subfigure}\thesubfigure 
        \label{fig:2c}
    \end{minipage}
    \hfill
    \begin{minipage}[b]{0.43\textwidth}
        \centering
        \includegraphics[width=\textwidth]{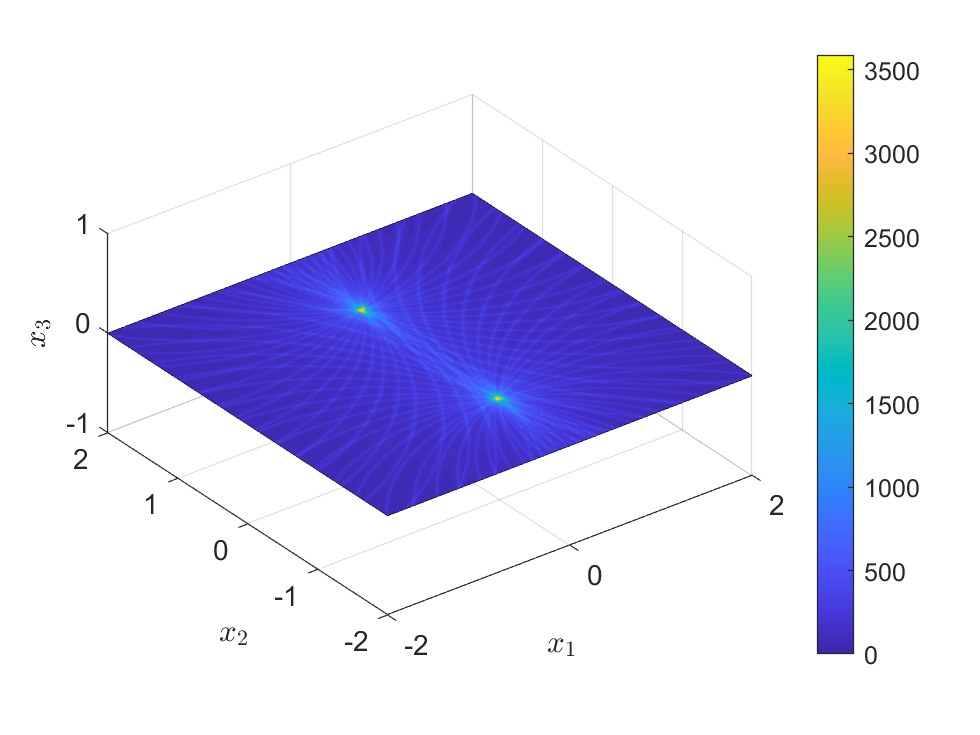} 
        \refstepcounter{subfigure}\thesubfigure 
        \label{fig:2d}
    \end{minipage}

    \begin{minipage}[b]{0.43\textwidth}
        \centering
        \includegraphics[width=\textwidth]{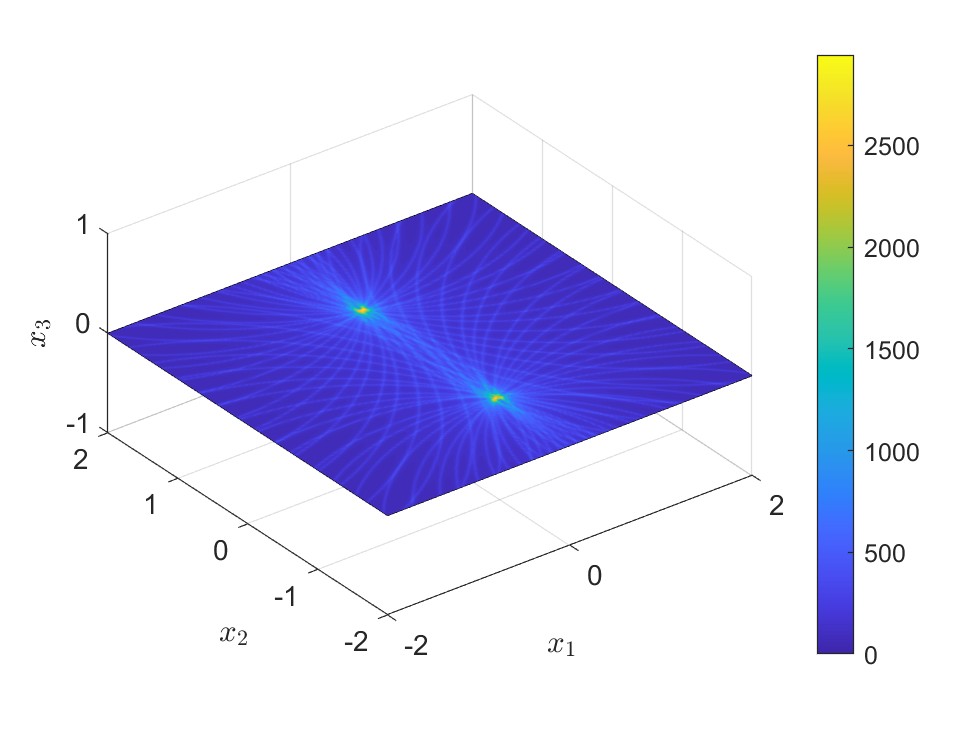} 
        \refstepcounter{subfigure}\thesubfigure 
        \label{fig:2e}
    \end{minipage}
    \hfill
    \begin{minipage}[b]{0.43\textwidth}
        \centering
        \includegraphics[width=\textwidth]{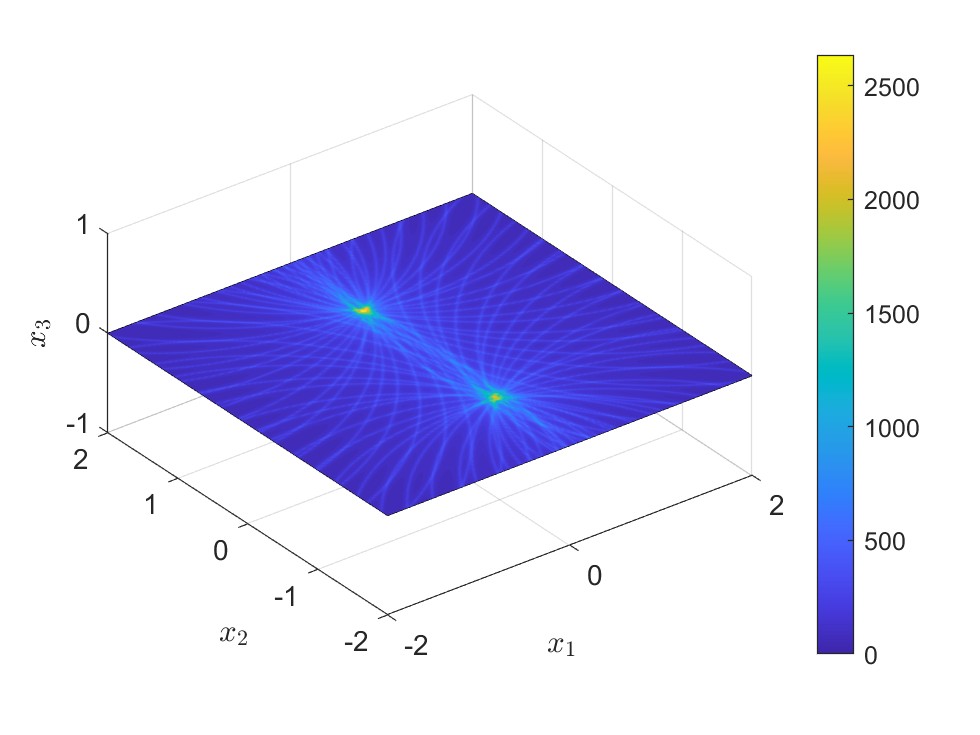} 
        \refstepcounter{subfigure}\thesubfigure 
        \label{fig:2f}
    \end{minipage}
    
    \caption{Reconstruction of two point sources. (a) Location of the sensors. (b) Sketch of the example. (c) Reconstruction employing $I(z)$ with $\epsilon=5\%$. (d) Reconstruction employing $\widetilde{I}(z)$ with $\epsilon=5\%$. (e) Reconstruction employing $\widetilde{I}(z)$ with $\epsilon=10\%$. (f) Reconstruction employing $\widetilde{I}(z)$ with $\epsilon=15\%$. }
    \label{fig1}
\end{figure}
\end{example}

\begin{example}
In this example, we considered a more complex case of fixed point sources inversion. The source points are chosen as $(1, 0, 0)$,  $(0, 1, 0)$, $(-1, 0, 0)$ and $(0, 0.5, 0)$ with the intensities $3$, $2$, $4$ and $1$, respectively. The experimental results in Figure \ref{fig2} demonstrate successful reconstruction of the three peripheral source points. However, the central point $(0, 0.5, 0)$ could not be reconstructed since none of the sensors is in the control area of the central point.

\begin{figure}[ht]
    \centering
    
    \begin{minipage}[b]{0.43\textwidth}
        \centering
        \includegraphics[width=\textwidth]{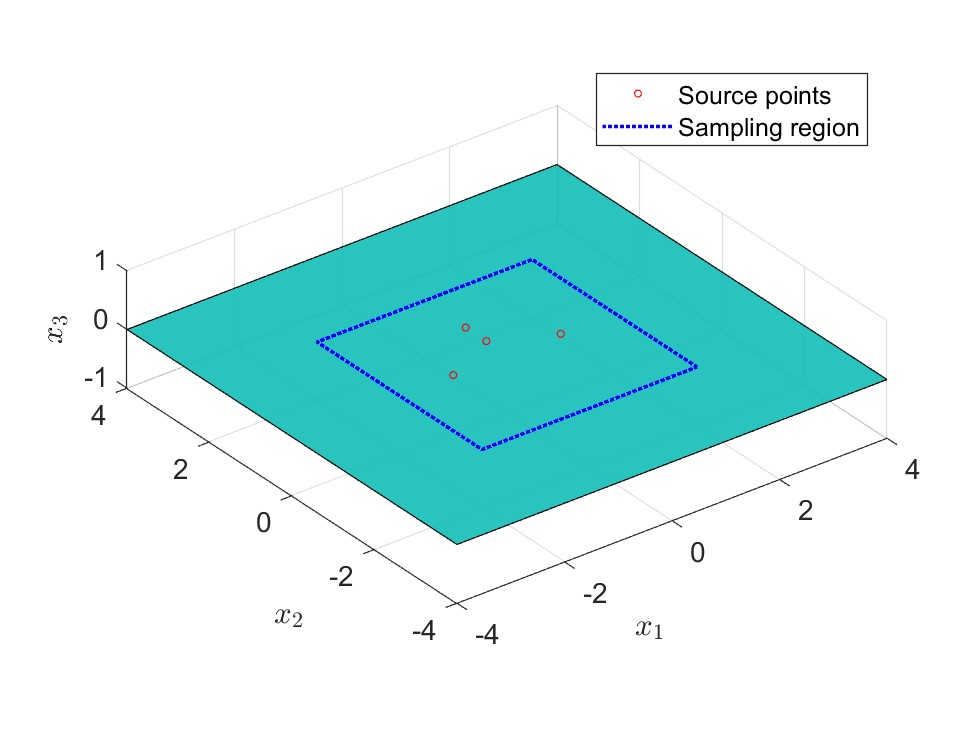}
        \refstepcounter{subfigure}(\alph{subfigure}) 
        \label{fig:2a}
    \end{minipage}
    \hfill
    \begin{minipage}[b]{0.43\textwidth}
        \centering
        \includegraphics[width=\textwidth]{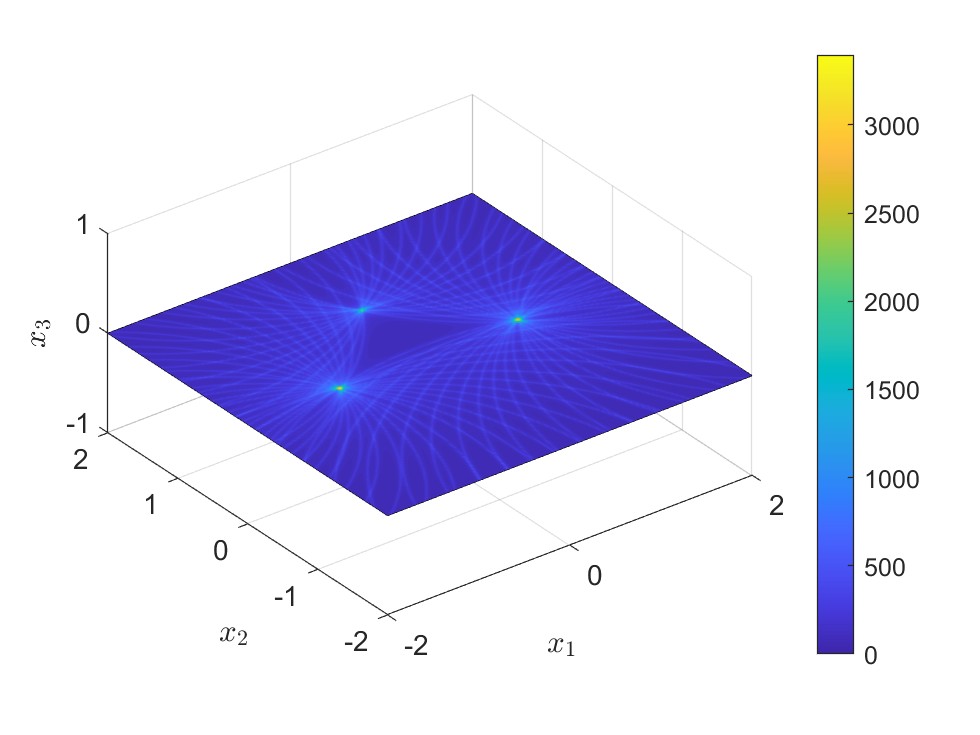}
        \refstepcounter{subfigure}(\alph{subfigure}) 
        \label{fig:2b}
    \end{minipage}
    
    \vspace{\floatsep}

    \caption{Reconstruction of four point sources.
        (a) Location of the source points. 
        (b) Reconstruction with $\epsilon=5\%$.
    }
    \label{fig2}
\end{figure}

\end{example}

\subsection{Reconstruction of curve sources in a two-dimensional cross section}\label{subsection5.2}
The reconstruction of curve sources is considered in this subsection. The sampling points are chosen as $ N_S \times N_S $ uniform discrete points in $[-4,  4] \times [-4,  4]$ in the $x_1x_2$-coordinate plane with $N_S =200$. The sensing points are chosen as $(5 \cos \theta_i,  5 \sin \theta_i,  0) $
with  $\displaystyle {\theta_i = \frac{i}{64}\pi}$,  $i = 0,  1,  \ldots,  63$.

\begin{example}
In this example, the reconstruction of the line source with the form $\lambda(t) \tau(x) \delta_L(x)$ on a two-dimensional cross section $x_3=0$ is considered. The straight line is chosen as
$$ L: 
\begin{cases} 
x_1 = \zeta,    \\
x_2 = 30 \zeta,    \\
x_3 = 0,
\end{cases} \quad \zeta \in [-0.08,  0.08],$$
and the reconstructions were performed with $\epsilon = 5\%$. Figure \ref{fig3} shows that the line source is well reconstructed. The inversion result is consistent with the theoretical analysis. The value of the indicator function shows significant differences for $z\in L$ and $z\in D\setminus L$. Furthermore, due to the larger control area of the endpoints, the inversion of the endpoints is more distinct.

\begin{figure}[ht]
    \centering
    
    \begin{minipage}[b]{0.43\textwidth}
        \centering
        \includegraphics[width=\textwidth]{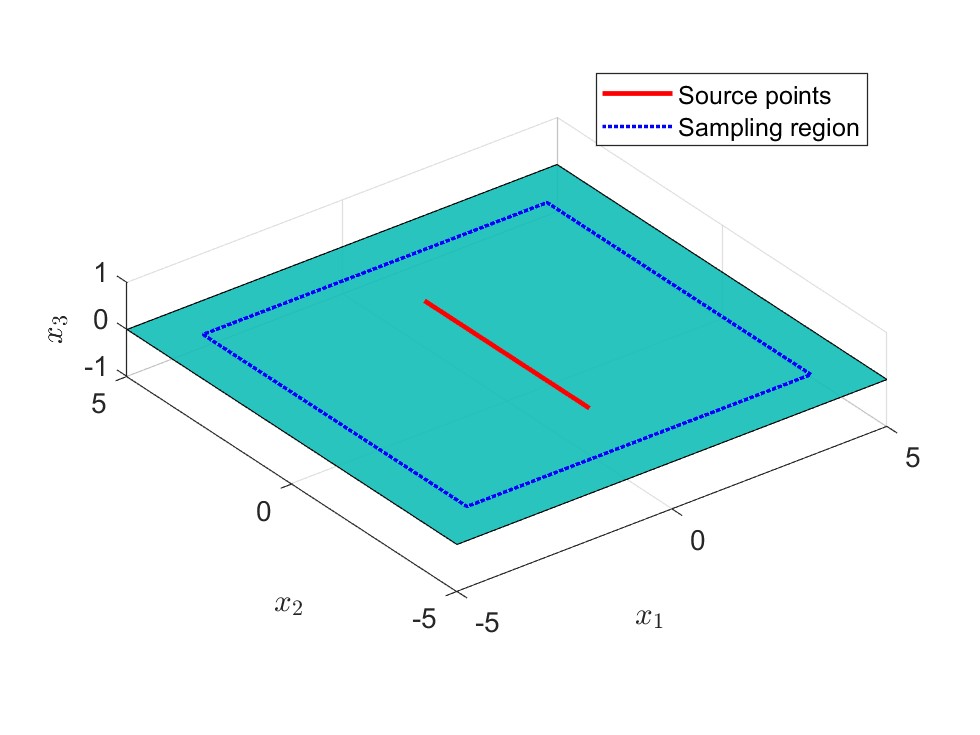}
        \refstepcounter{subfigure}(\alph{subfigure}) 
        \label{fig:2a}
    \end{minipage}
    \hfill
    \begin{minipage}[b]{0.43\textwidth}
        \centering
        \includegraphics[width=\textwidth]{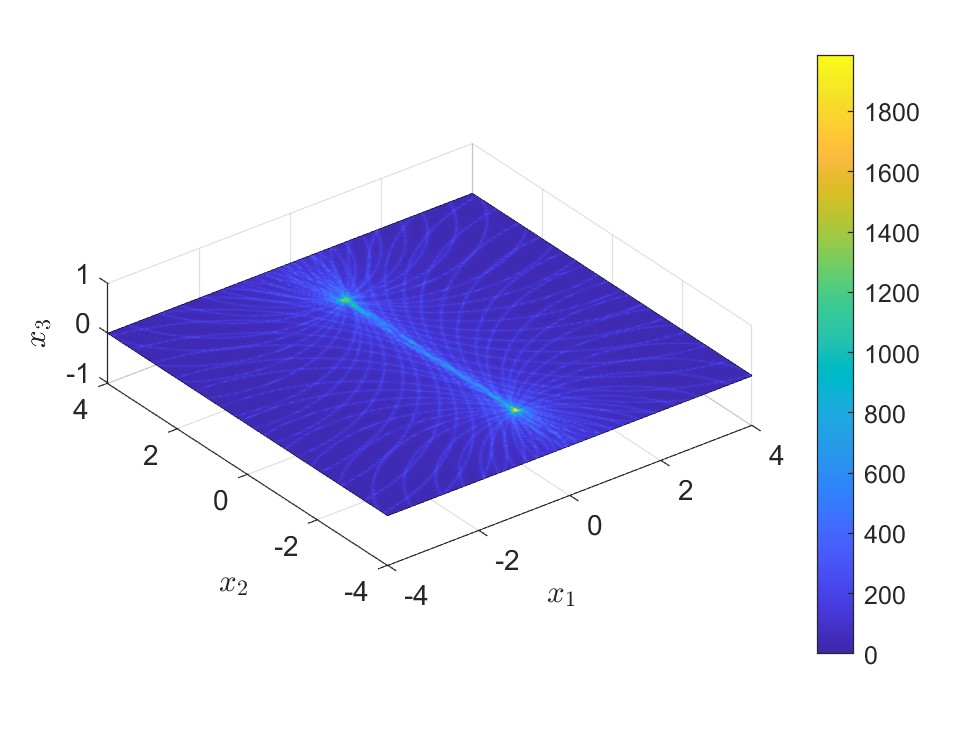}
        \refstepcounter{subfigure}(\alph{subfigure}) 
        \label{fig:2b}
    \end{minipage}
    
    \vspace{\floatsep}

    \caption{Reconstruction of a line source.
        (a) Location of the line source. 
        (b) Reconstruction with $\epsilon=5\%$. 
        }
    \label{fig3}
\end{figure}
\end{example}

\begin{example}
In this example, we perform the inversion on two separate line sources simultaneously. The straight lines are chosen as
$$
L_1 : \begin{cases} 
x_1 = 0, \\ 
x_2 = 1.0 + 0.2\zeta, \\ 
x_3 = 0, 
\end{cases}
\quad \zeta \in [0,10],
$$
and
$$
L_2 : \begin{cases} 
x_1 = -1.2 + 0.2\zeta , \\ 
x_2 = -2, \\ 
x_3 = 0, 
\end{cases}
\quad \zeta \in [1,12].
$$
In Figure \ref{fig4}, we present a visualization of the values of the indicator function within the sampling region.

\begin{figure}[ht]
    \centering
    
    \begin{minipage}[b]{0.43\textwidth}
        \centering
        \includegraphics[width=\textwidth]{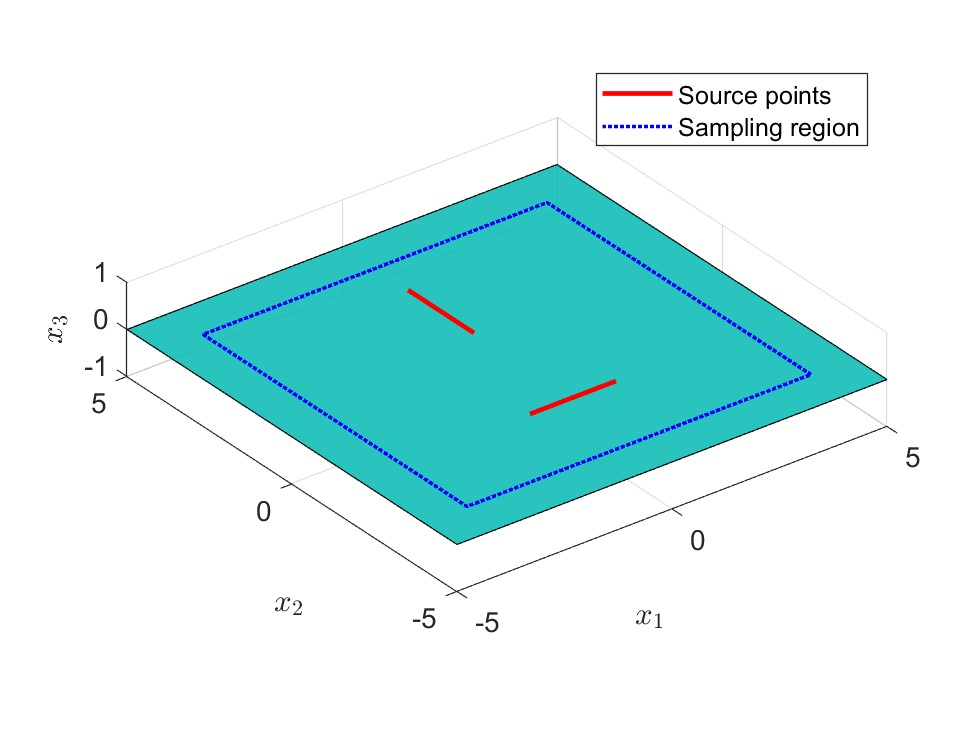}
        \refstepcounter{subfigure}(\alph{subfigure}) 
        \label{fig:2a}
    \end{minipage}
    \hfill
    \begin{minipage}[b]{0.43\textwidth}
        \centering
        \includegraphics[width=\textwidth]{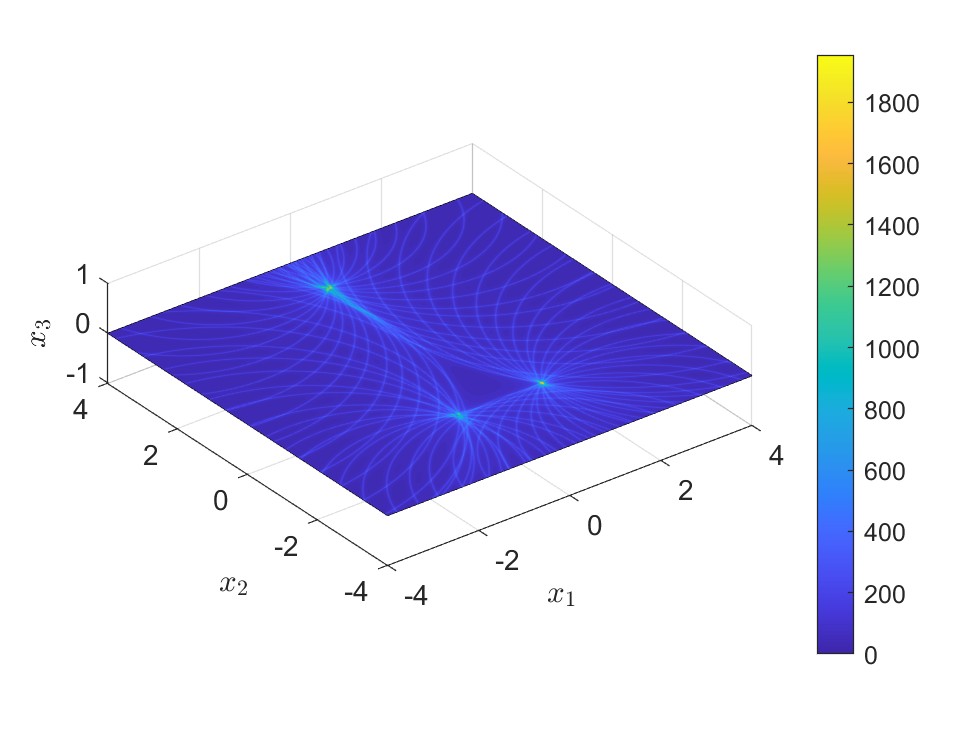}
        \refstepcounter{subfigure}(\alph{subfigure}) 
        \label{fig:2b}
    \end{minipage}
    
    \vspace{\floatsep}

    \caption{Reconstruction of multiple line sources.
        (a) Location of the multiple line sources. 
        (b) Reconstruction with $\epsilon=5\%$. 
        }
    \label{fig4}
\end{figure}
\end{example}

\begin{example}
In this example, we conduct the inversion for a curve source on a two-dimensional cross section. The curve is chosen as
$$ L: 
\begin{cases} 
x_1 = \zeta,\\
x_2 = \zeta^2 - 1, \\
x_3 = 0,  
\end{cases} 
\quad \zeta \in [-0.9,  1.4]. $$
Figure \ref{fig5} presents the inversion results of the curve sources. It is clearly evident that the convex portions of the curve can be well reconstructed, with the final inversion shape approximately forming the convex hull of the original sources.
\begin{figure}[ht]
    \centering
    
    \begin{minipage}[b]{0.43\textwidth}
        \centering
        \includegraphics[width=\textwidth]{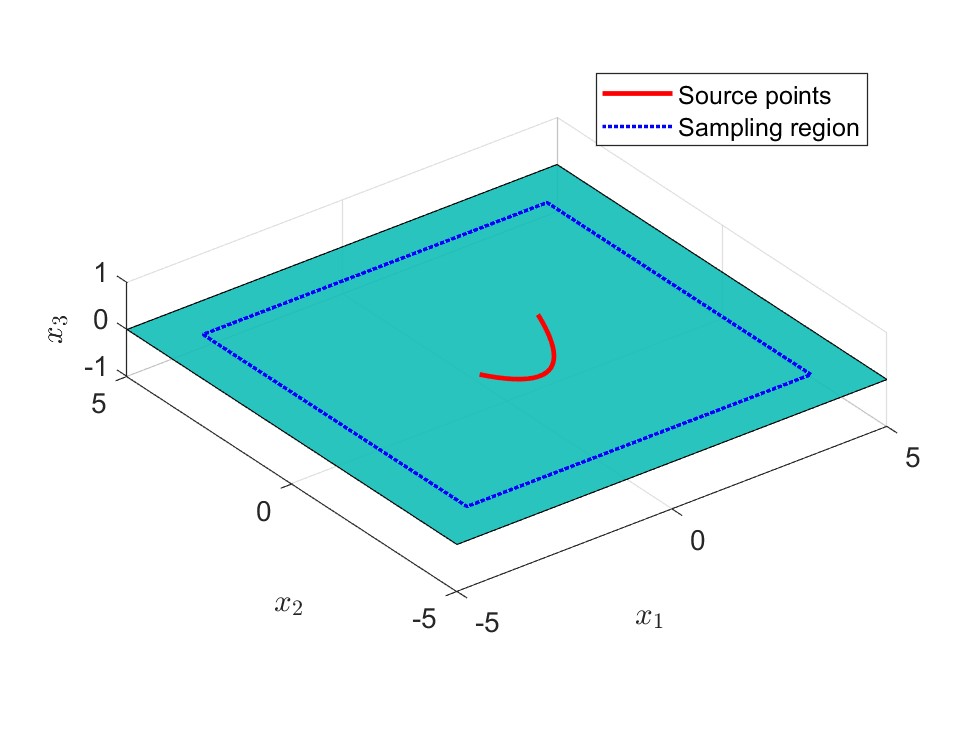}
        \refstepcounter{subfigure}(\alph{subfigure}) 
        \label{fig:2a}
    \end{minipage}
    \hfill
    \begin{minipage}[b]{0.43\textwidth}
        \centering
        \includegraphics[width=\textwidth]{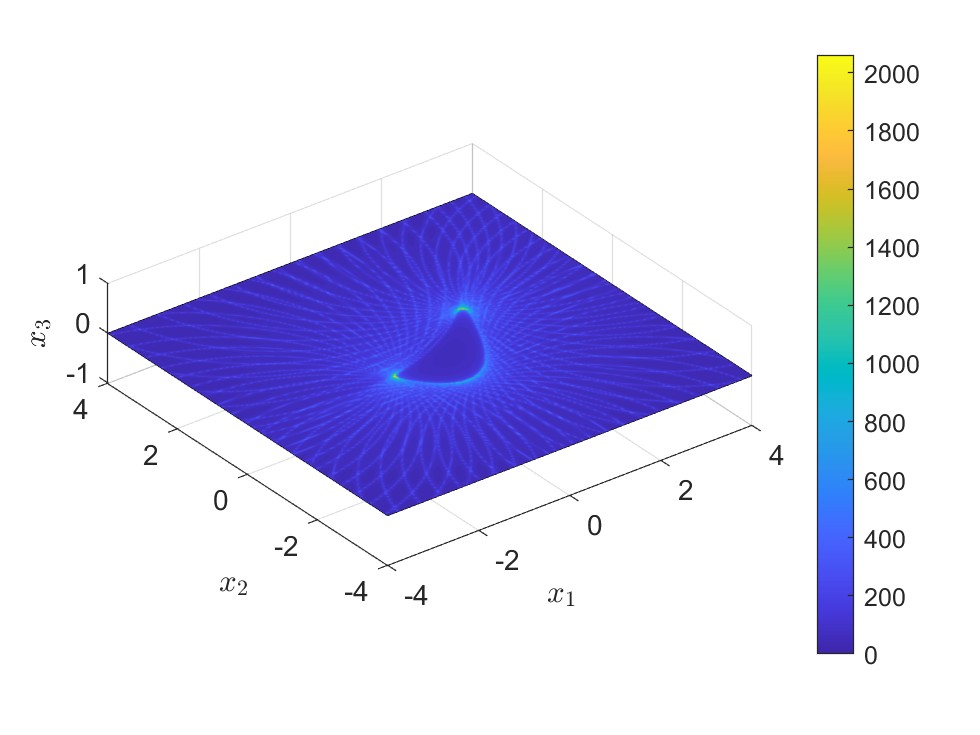}
        \refstepcounter{subfigure}(\alph{subfigure}) 
        \label{fig:2b}
    \end{minipage}

    \caption{Reconstruction of a curve source.
        (a) Location of the curve source. 
        (b) Reconstruction with $\epsilon=5\%$. 
    }
    \label{fig5}
\end{figure}
\end{example}

\subsection{Reconstruction of planar sources in a two-dimensional cross section}

In this subsection, we consider the reconstruction of planar sources on a two-dimensional cross section. The sampling points and the sensing points are chosen the same as that in Subsection \ref{subsection5.2}.

\begin{example}
In this example, the reconstruction of sources of the form $\lambda(t) \tau(x) \delta_\Sigma(x)$ is considered. The spatial support of the source is a polygon $\Sigma$ and the set of the vertices is chosen as
$$\{(0, 0),  (2, 1),  (3, 3),  (1, 3),  (-1, 2)\}.$$
Figure \ref{fig6} shows the inversion results for this convex polygon. Consistent with the theoretical analysis, the polygon can be well reconstructed in the inversion results.
\begin{figure}[ht]
    \centering
    
    \begin{minipage}[b]{0.43\textwidth}
        \centering
        \includegraphics[width=\textwidth]{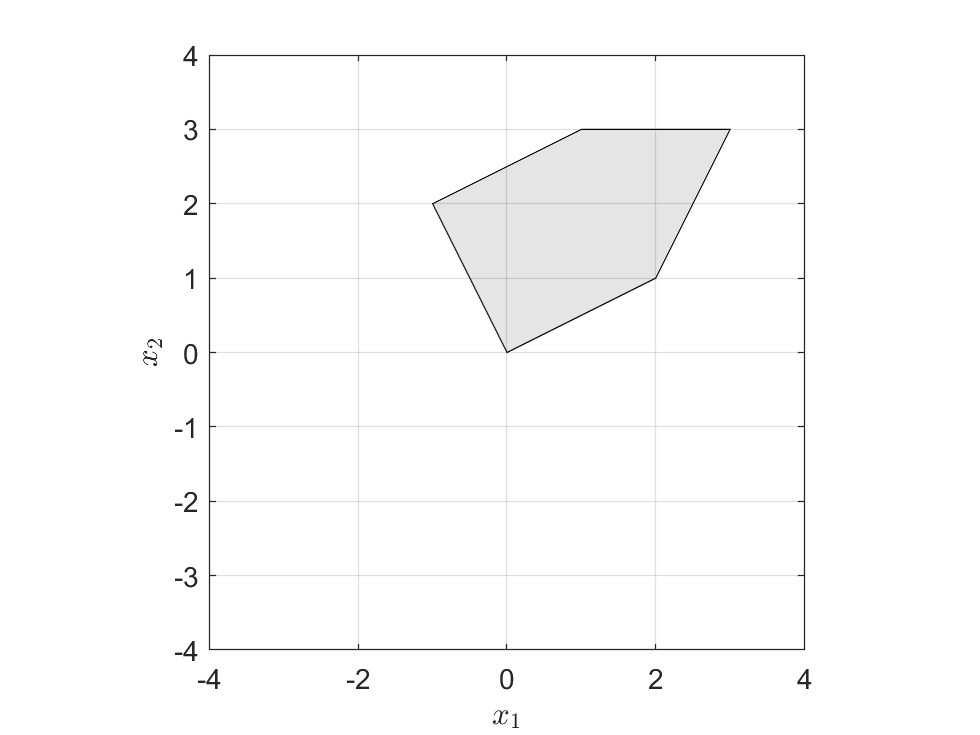}
        \refstepcounter{subfigure}(\alph{subfigure}) 
        \label{fig:2a}
    \end{minipage}
    \hfill
    \begin{minipage}[b]{0.43\textwidth}
        \centering
        \includegraphics[width=\textwidth]{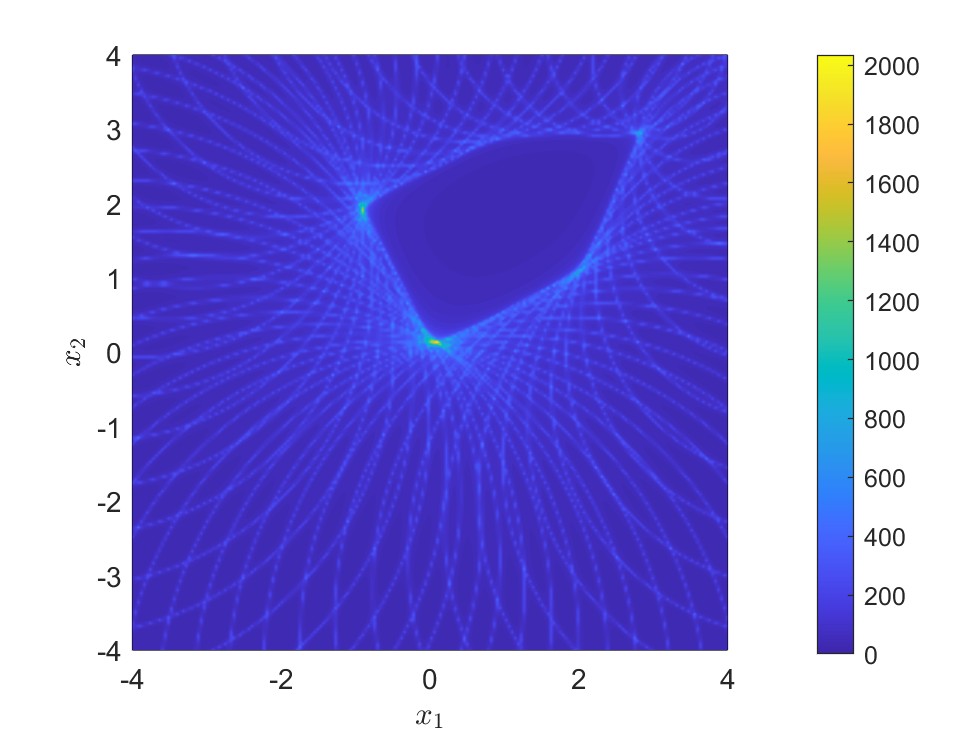}
        \refstepcounter{subfigure}(\alph{subfigure}) 
        \label{fig:2b}
    \end{minipage}
    
    \vspace{\floatsep}

    \caption{Reconstruction of a polygonal surface source.
        (a) Location of the polygonal surface source. 
        (b) Reconstruction with $\epsilon=5\%$. 
        }
    \label{fig6}
\end{figure}
\end{example}

\begin{example}
In this example, a source whose spatial support is a smooth convex region is considered. The boundary of an egg-shaped convex region is chosen as
$$ \begin{cases}
x_1 = \dfrac{1 + 0.6 \cos \zeta}{1 + 0.8 \cos \zeta} \cdot \cos \zeta,  \\[10pt]
x_2 = \dfrac{1 + 0.6 \cos \zeta}{1 + 0.8 \cos \zeta} \cdot \sin \zeta, \\[10pt]
x_3=0, 
\end{cases}
\quad \zeta \in [0,  2\pi]. $$
Figure \ref{fig7} presents the inversion results for the surface source in the egg-shaped convex region. The boundary of the source is effectively recovered in this case.

\begin{figure}[ht]
    \centering
    
    \begin{minipage}[b]{0.43\textwidth}
        \centering
        \includegraphics[width=\textwidth]{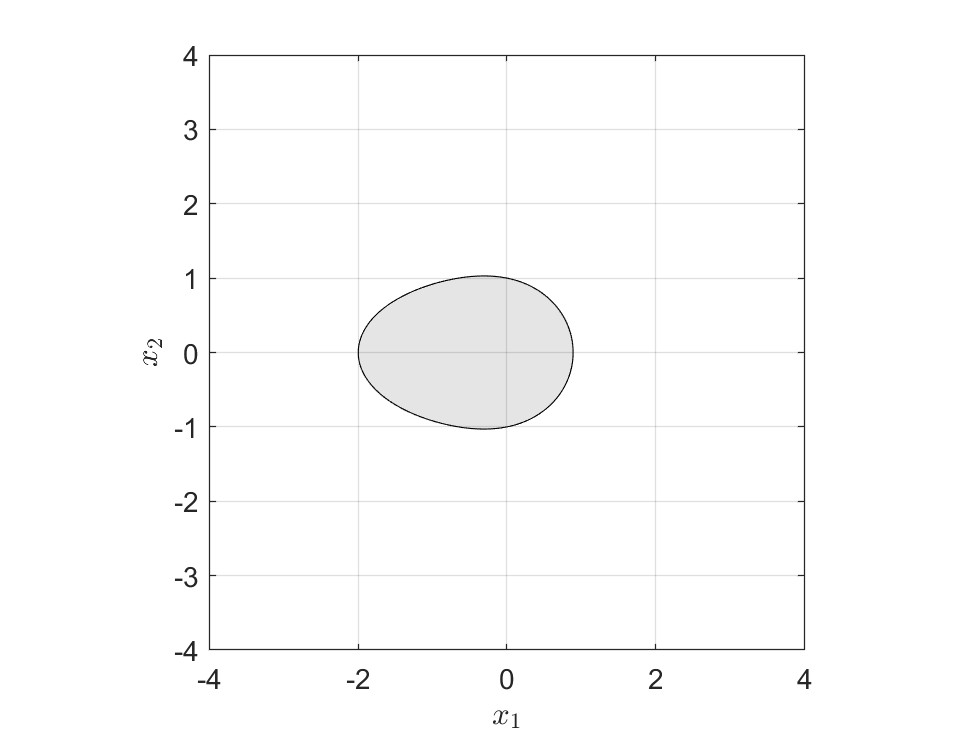}
        \refstepcounter{subfigure}(\alph{subfigure}) 
        \label{fig:2a}
    \end{minipage}
    \hfill
    \begin{minipage}[b]{0.43\textwidth}
        \centering
        \includegraphics[width=\textwidth]{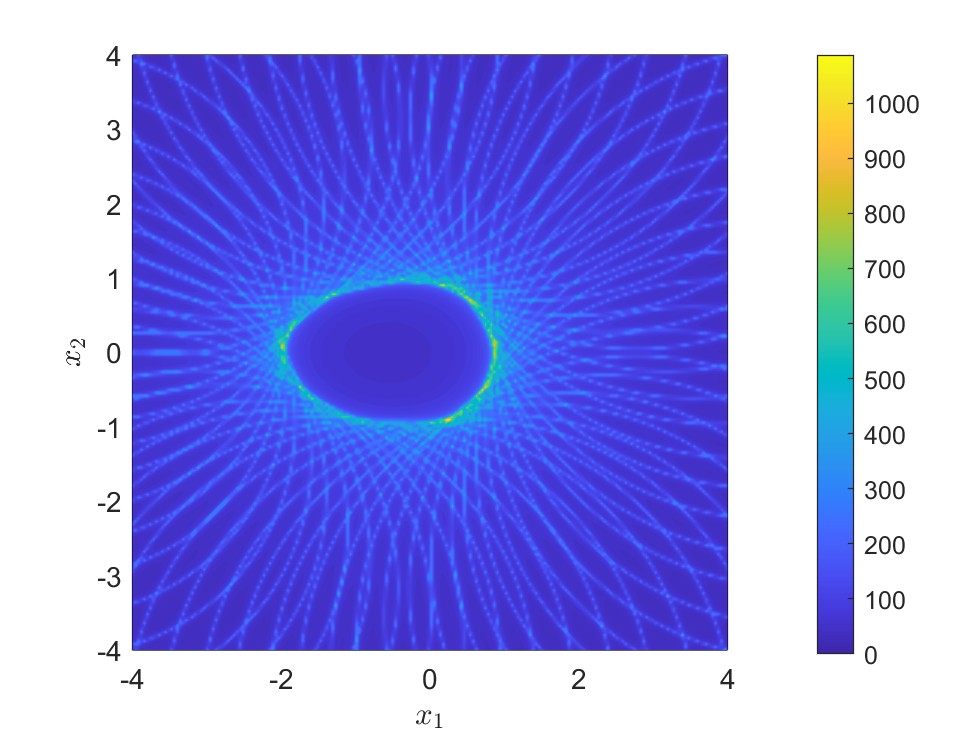}
        \refstepcounter{subfigure}(\alph{subfigure}) 
        \label{fig:2b}
    \end{minipage}

    \caption{Reconstruction of a planar source in a smooth convex region.
        (a) Location of the planar source. 
        (b) Reconstruction with $\epsilon=5\%$.
    }
    \label{fig7}
\end{figure}
\end{example}

\begin{example}
In this example, we reconstruct a planar source with a non-convex shape. The source boundary is chosen as
$$ \partial\Omega: 
\begin{cases} 
x_1=0.6\sqrt{\dfrac{17}{4+2\cos(3\zeta)\cos \zeta}}, \\
x_2 = 0.6\sqrt{\dfrac{17}{4+2\cos(3\zeta)\sin \zeta}} , \\
x_3 = 0,
\end{cases} \quad \zeta\in [0,2\pi]$$
The inversion result presented in Figure \ref{fig8} demonstrates that the reconstruction approximately shows the convex hull of the non-convex region.
\begin{figure}[ht]
    \centering
    
    \begin{minipage}[b]{0.43\textwidth}
        \centering
        \includegraphics[width=\textwidth]{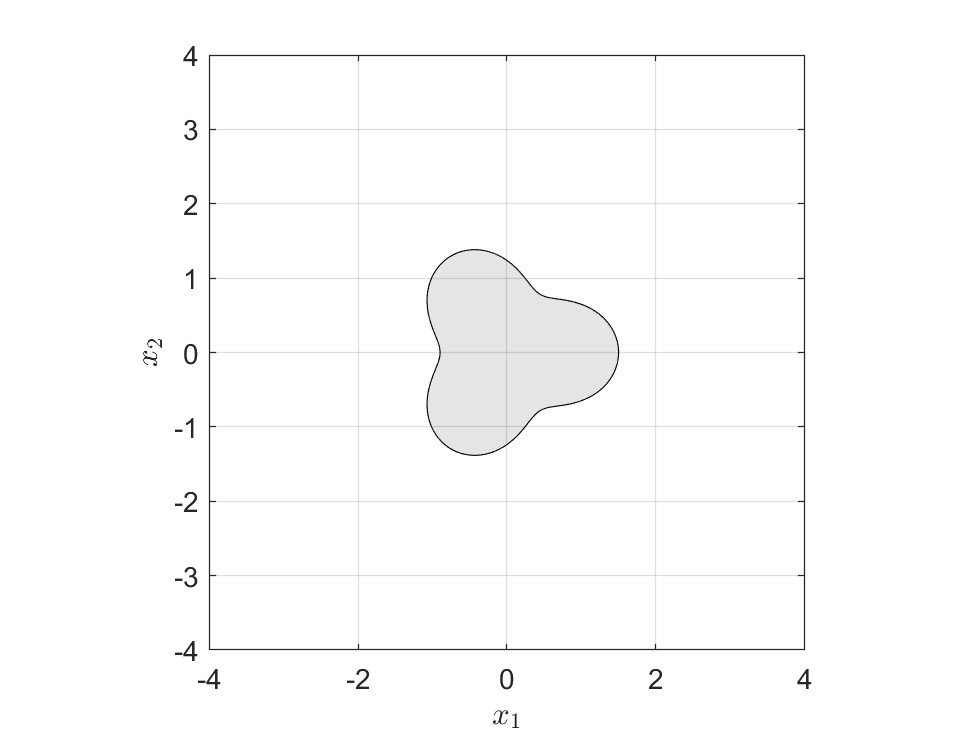}
        \refstepcounter{subfigure}(\alph{subfigure}) 
        \label{fig:2a}
    \end{minipage}
    \hfill
    \begin{minipage}[b]{0.43\textwidth}
        \centering
        \includegraphics[width=\textwidth]{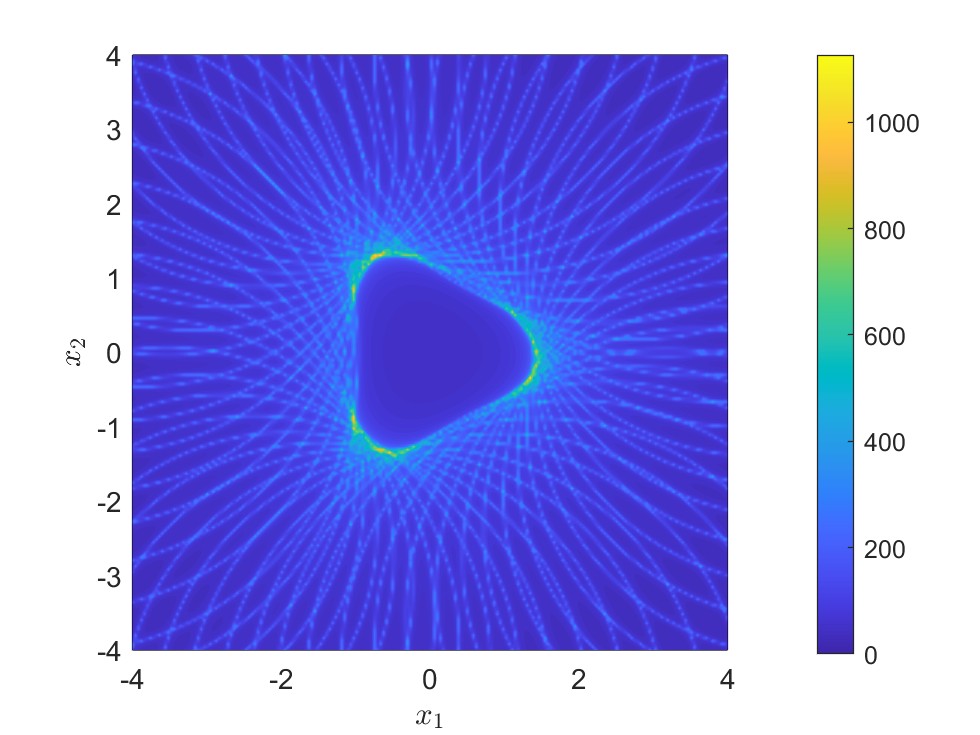}
        \refstepcounter{subfigure}(\alph{subfigure}) 
        \label{fig:2b}
    \end{minipage}
    \vspace{\floatsep}
    
    \caption{Reconstruction of a planar source in a non-convex region.
        (a) Location of the planar source in a non-convex region. 
        (b) Reconstruction with $\epsilon=5\%$. 
        }
    \label{fig8}
\end{figure}
\end{example}

\begin{example}\label{example9}
In this example, the reconstruction of multiple planar sources is considered. The spatial support of the multiple planar sources is chosen as two separate circles $C_1$ and $C_2$, where $ C_1 $ is centered at $ (-2.5,  -2.5,  0) $ and $ C_2 $ is centered at $ (2.5,  2.5,  0) $, with the same radius $ r = 0.6 $. The experimental results are presented in Figure \ref{fig9}. For the two separate circles, the boundaries facing away from each other can be effectively reconstructed, while the boundaries facing each other cannot be well recovered. This limitation arises because the intermediate source points have limited control area on the measurement surface. 

Inspired by this insight, additional sensing points will be incorporated in the next experiment to achieve better inversion performance.
\begin{figure}[ht]
    \centering
    
    \begin{minipage}[b]{0.43\textwidth}
        \centering
        \includegraphics[width=\textwidth]{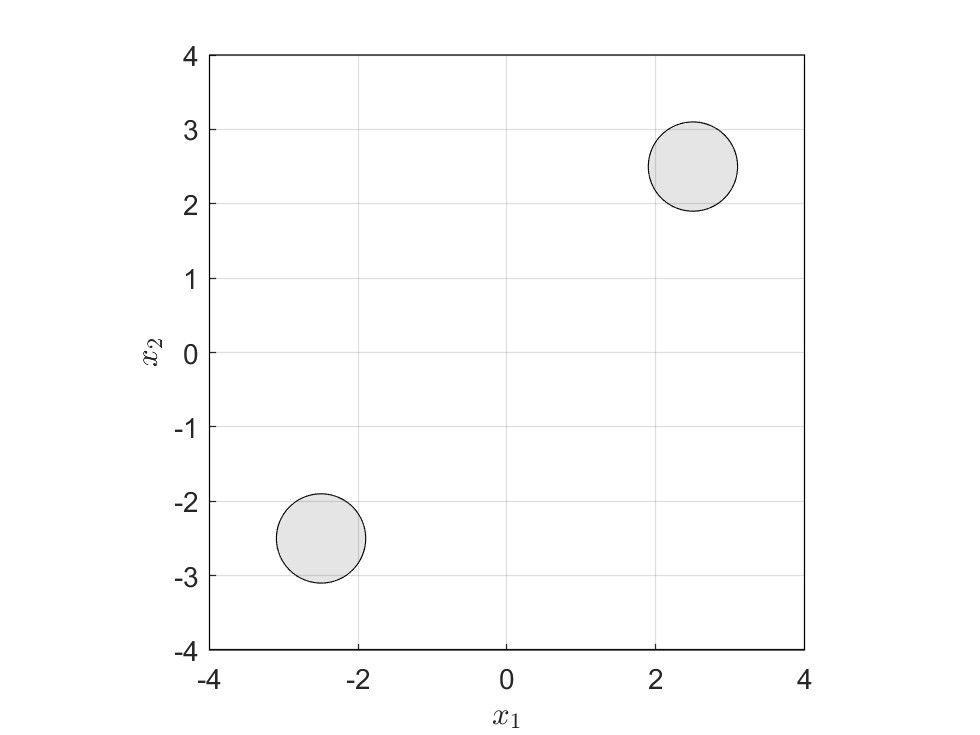}
        \refstepcounter{subfigure}(\alph{subfigure}) 
        \label{fig:2a}
    \end{minipage}
    \hfill
    \begin{minipage}[b]{0.43\textwidth}
        \centering
        \includegraphics[width=\textwidth]{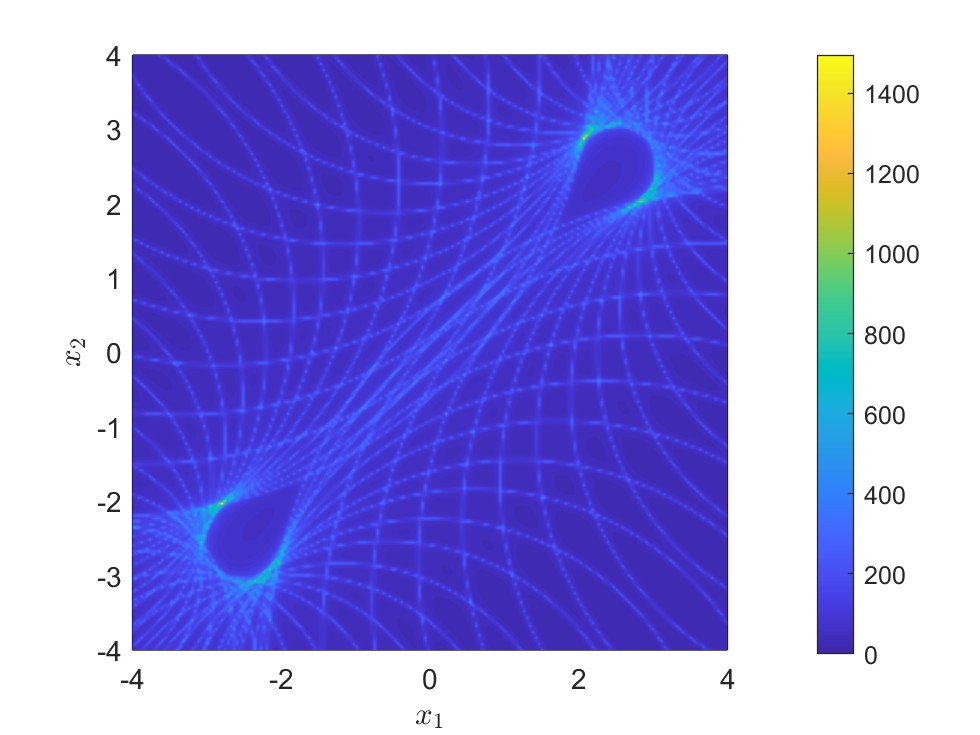}
        \refstepcounter{subfigure}(\alph{subfigure}) 
        \label{fig:2b}
    \end{minipage}

    \caption{Reconstruction of two separate planar sources.
        (a) Location of the planar sources. 
        (b) Reconstruction with $\epsilon=5\%$. 
    }
    \label{fig9}
\end{figure}
\end{example}

\begin{example}
To better reconstruct the boundary of the spatial support of the multiple planar sources in Example \ref{example9}, an additional set of $16$ sensing points is placed in the sampling area. The positions of the additional sensing points are chosen as
$$(-4 + 0.5 i, 4 - 0.5 i, 0),\quad i=0,1,\cdots,15.$$
The experimental results are presented in Figure \ref{fig10}. As shown in the figure, all boundaries of the sources are well reconstructed.
\begin{figure}
    \centering
    \includegraphics[width=0.43\linewidth]{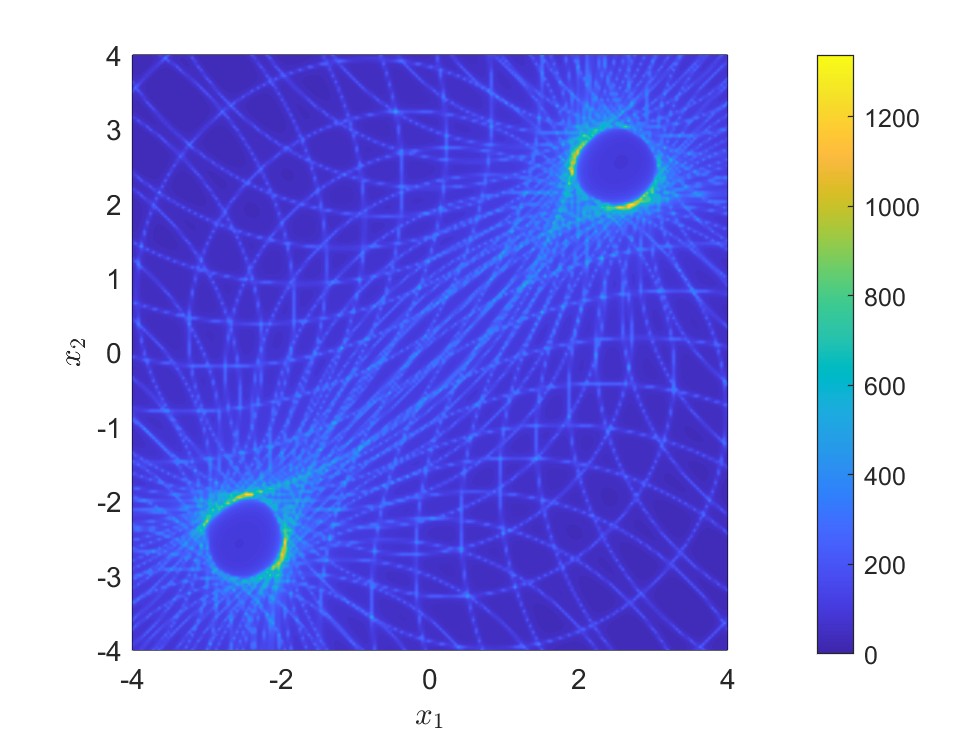}
    \caption{ 
    An improved reconstruction of the two separate planar sources with an additional set of $16$ sensing points.
    }
\end{figure}
\label{fig10}
\end{example}

\subsection{Reconstruction of sources in the three-dimensional space}
The sampling points are chosen as $ N_S \times N_S \times N_S $ uniform discrete points in $[-4,  4] \times [-4,  4] \times [-4,  4] $ with $ N_S = 200 $ in this subsection. To better invert the sources in three-dimensional space, we selected the receivers as
$$ (\sqrt{R^2 - y_i^2} \cdot \cos(\beta \cdot i), y_i,\sqrt{R^2 - y_i^2} \cdot \sin(\beta \cdot i) ), \quad i=0,1,\cdots,99,
$$
where $y_i= R \left( 1 - \dfrac{2i}{n-1} \right)$, $R=5$ and $\beta = \pi(3 - \sqrt{5})$ is the golden angle.

\begin{example}
We investigate the reconstruction of multiple point sources in the three-dimensional space in this example. The source points are chosen as $(1, 1.5, 0)$,  $(-1, -2, 0)$,  $(3, 1, 2)$ and $(3, -1, 1.5)$ with unit intensity $1$. Figure \ref{fig11} presents the experimental results in two two-dimensional cross sections. As can be seen from the figure, all points have been successfully reconstructed.
\begin{figure}[ht]
    \centering
    
    \begin{minipage}[b]{0.43\textwidth}
        \centering
        \includegraphics[width=\textwidth]{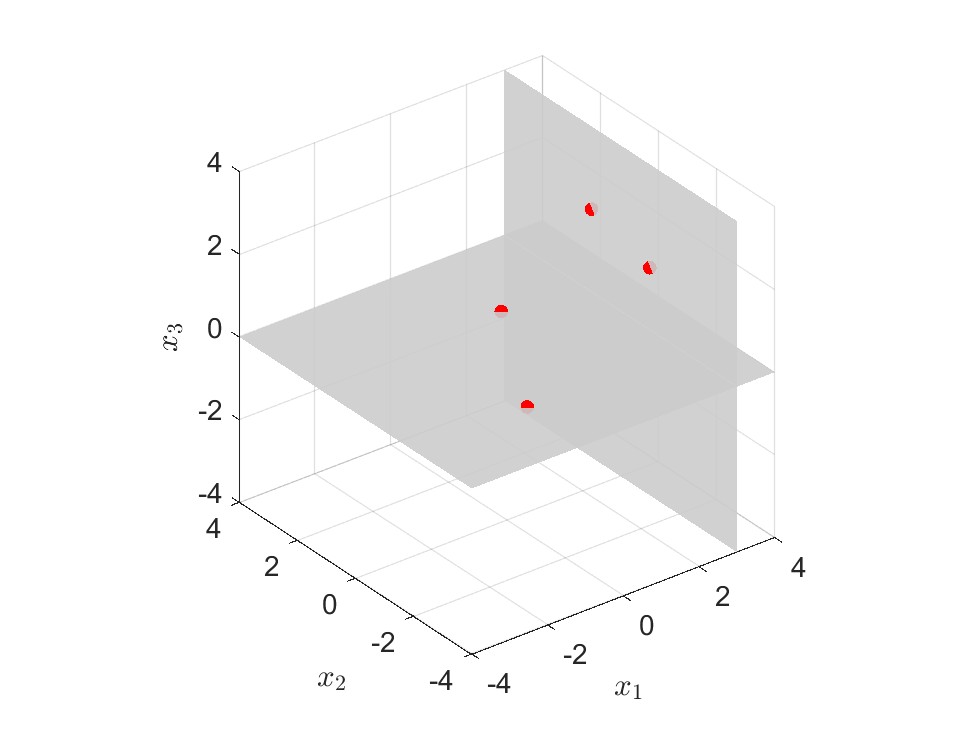}
        \refstepcounter{subfigure}(\alph{subfigure}) 
        \label{fig:2a}
    \end{minipage}
    \hfill
    \begin{minipage}[b]{0.43\textwidth}
        \centering
        \includegraphics[width=\textwidth]{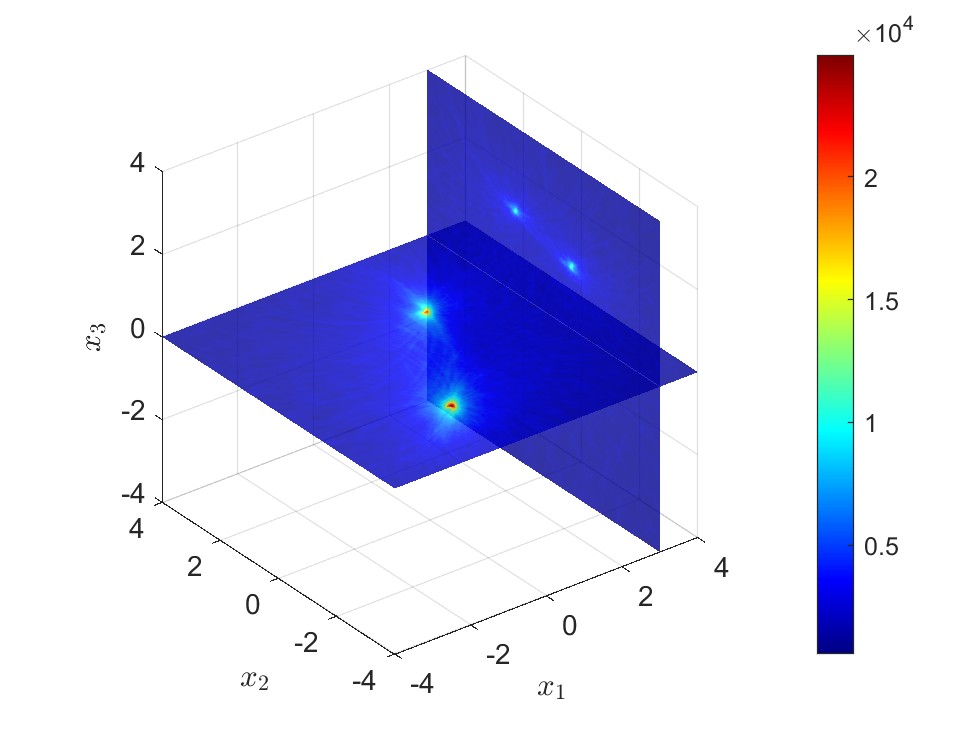}
        \refstepcounter{subfigure}(\alph{subfigure}) 
        \label{fig:2b}
    \end{minipage}
    
    \vspace{\floatsep}

    \caption{Reconstruction of multiple point sources in the three-dimensional space.
        (a) Location of the source points. 
        (b) Reconstruction with $\epsilon=5\%$. 
        }
    \label{fig11}
\end{figure}

\end{example}

\begin{example}
We investigate the reconstruction of a curve source in the three-dimensional space. The source curve is chosen as 
$$ \begin{cases}
    x_1 = R \cos \zeta,  \\[6pt]
    x_2 = R \sin \zeta,  \\[6pt]
    x_3 = \dfrac{h}{2\pi} \zeta, 
\end{cases}
\quad \zeta \in [0,  2\pi]. $$
Figure \ref{fig12} presents the inversion results of the curve source. We mark the sampling points $z\in D$ where the value $I(z)$ exceeds the threshold of $5000$. As can be observed from the figure, the curve can be accurately reconstructed when an appropriate threshold value is selected.
\begin{figure}[ht]
    \centering
    
    \begin{minipage}[b]{0.43\textwidth}
        \centering
        \includegraphics[width=\textwidth]{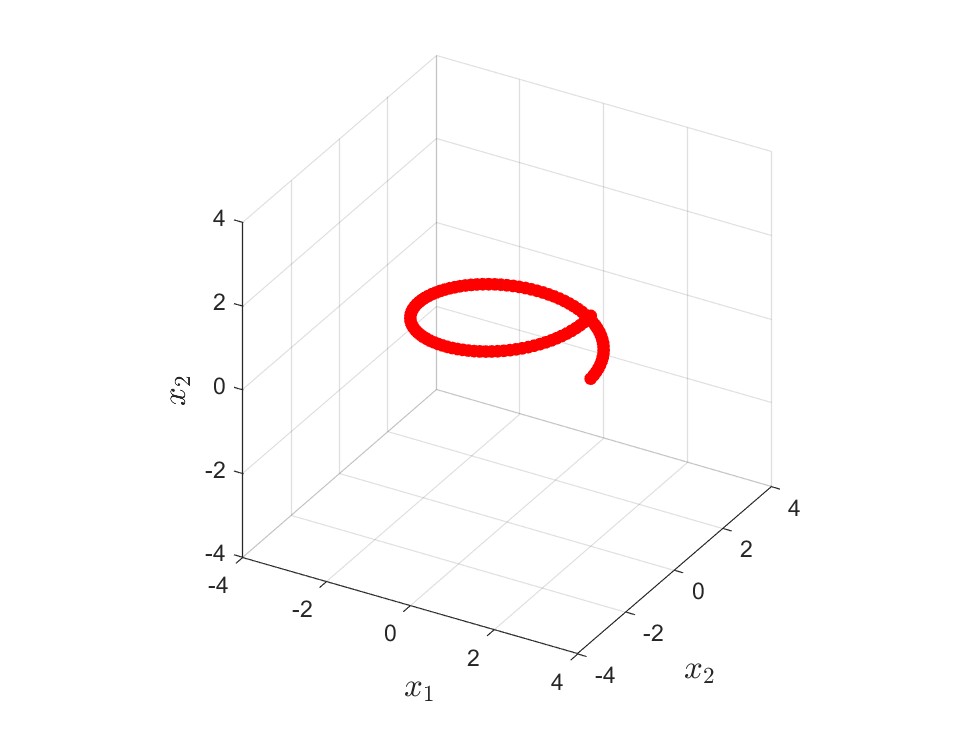}
        \refstepcounter{subfigure}(\alph{subfigure}) 
        \label{fig:2a}
    \end{minipage}
    \hfill
    \begin{minipage}[b]{0.43\textwidth}
        \centering
        \includegraphics[width=\textwidth]{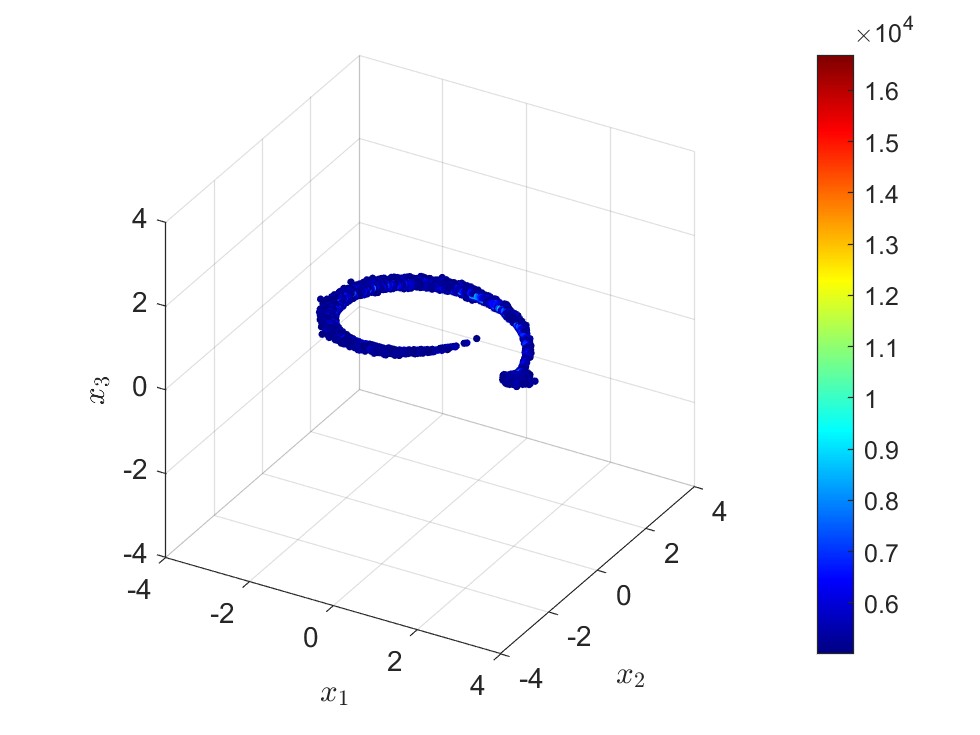}
        \refstepcounter{subfigure}(\alph{subfigure}) 
        \label{fig:2b}
    \end{minipage}
    
    \vspace{\floatsep}

    \caption{Reconstruction of the curve source in the three-dimensional space.
        (a) Location of the curve source. 
        (b) Reconstruction with $\epsilon=5\%$. 
        }
    \label{fig12}
\end{figure}
\end{example}

\begin{example}
 In this example, we consider the reconstruction of a planar source using the three-dimensional wave data. The spatial support of the planar source is chosen as $[-2,  2] \times [-2,  2] \times \{0\}$. Figure \ref{fig13} presents the 3D inversion results of the planar source. The sampling points with $\widetilde{I}(z) > 3000$ are displayed.  As can be observed from the figure, the edges can be accurately reconstructed, while the interior points can not be effectively recovered using this method.
\begin{figure}[ht]
    \centering
    
    \begin{minipage}[b]{0.43\textwidth}
        \centering
        \includegraphics[width=\textwidth]{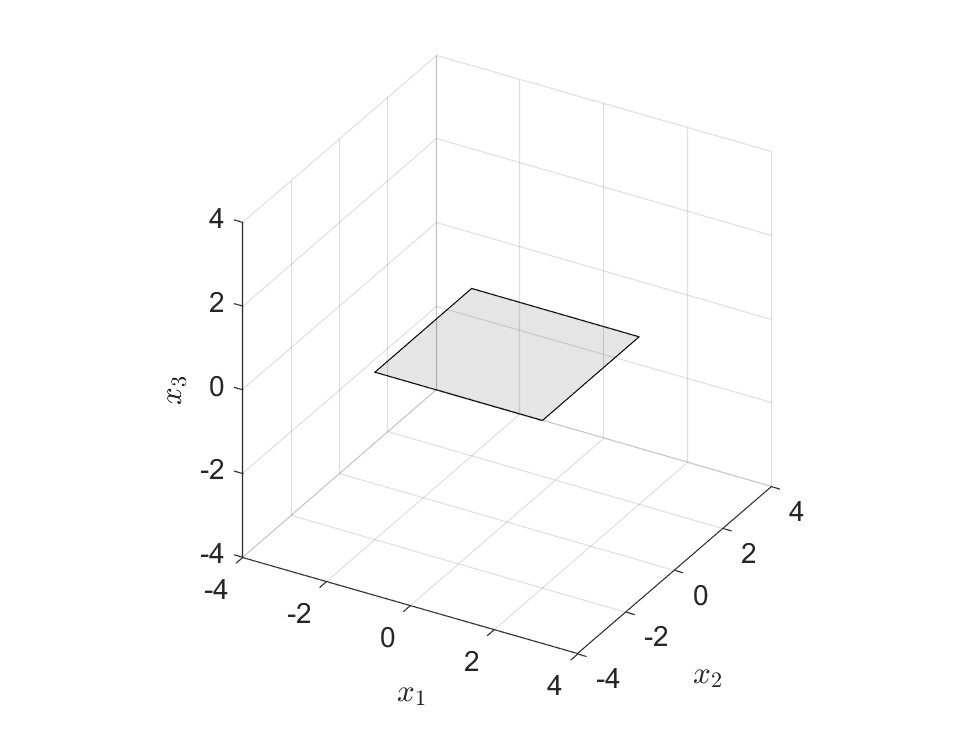}
        \refstepcounter{subfigure}(\alph{subfigure}) 
        \label{fig:2a}
    \end{minipage}
    \hfill
    \begin{minipage}[b]{0.43\textwidth}
        \centering
        \includegraphics[width=\textwidth]{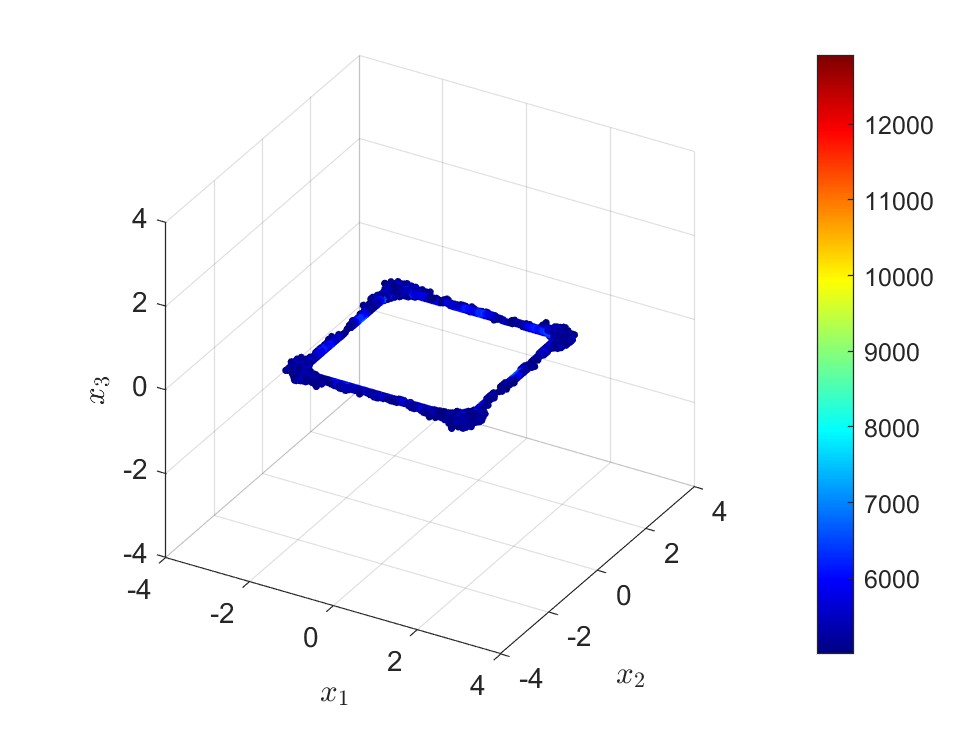}
        \refstepcounter{subfigure}(\alph{subfigure}) 
        \label{fig:2b}
    \end{minipage}
    
    \vspace{\floatsep}

    \caption{Reconstruction of a planar source using the three-dimensional wave data.
        (a) Location of the planar source. 
        (b) Reconstruction with $\epsilon=5\%$. 
        }
    \label{fig13}
\end{figure}
\end{example}

\begin{example}
The reconstruction of a tetrahedral source is considered in this example. The four vertices of the tetrahedron are chosen as $[2, 2, 2]$, $[2, -2, -2]$, $[-2, 2, -2]$ and $[-2, -2, 2]$. The experimental results are presented in Figure \ref{fig14}. The sampling points with $\widetilde{I}(z) > 3000$ are displayed. As can be observed from the figure, neither the faces nor the interior points of the tetrahedral source can be effectively reconstructed. However, the edges of the tetrahedron demonstrate satisfactory inversion performance.
\begin{figure}[ht]
    \centering
    
    \begin{minipage}[b]{0.43\textwidth}
        \centering
        \includegraphics[width=\textwidth]{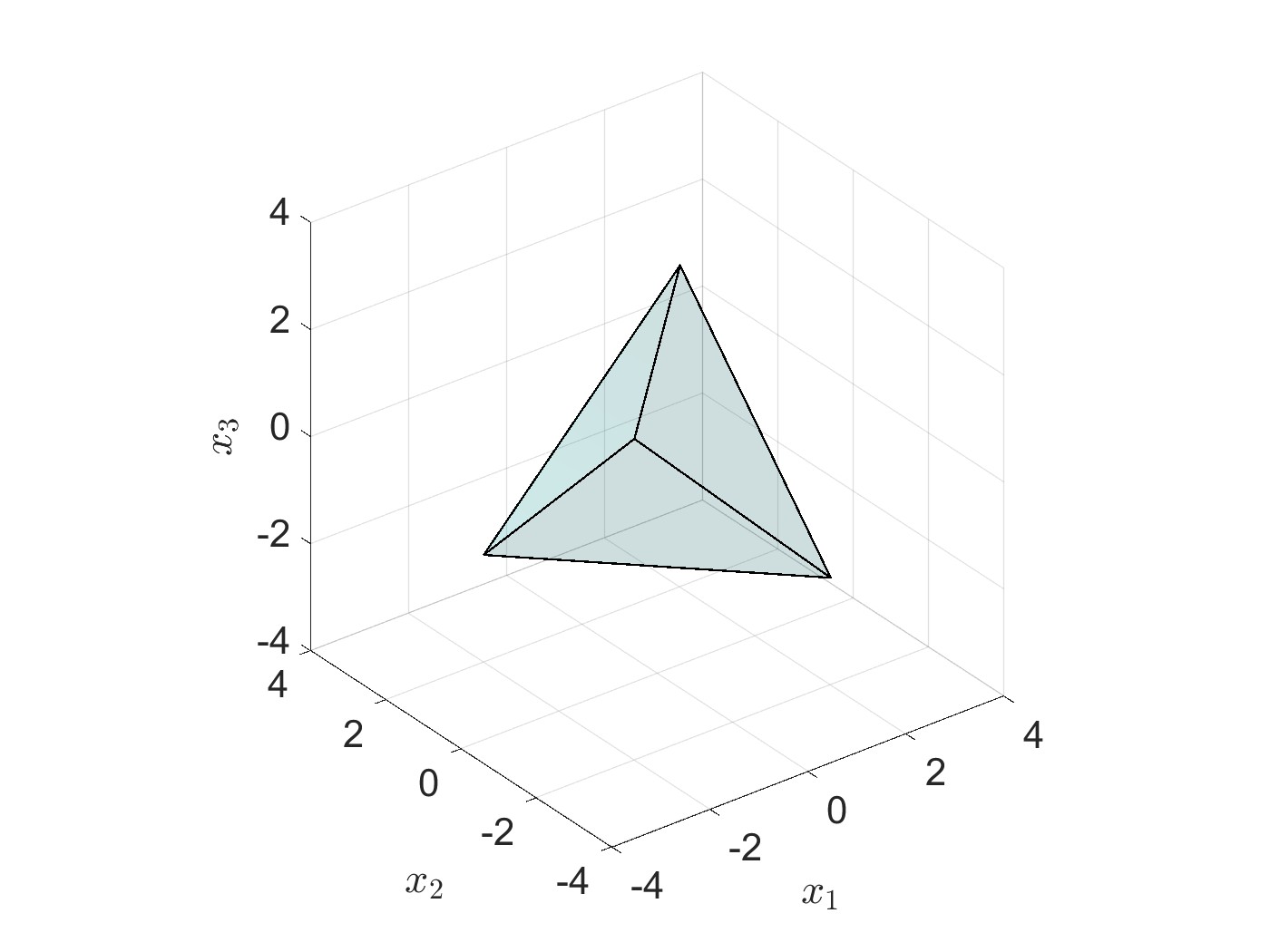}
        \refstepcounter{subfigure}(\alph{subfigure}) 
        \label{fig:2a}
    \end{minipage}
    \hfill
    \begin{minipage}[b]{0.43\textwidth}
        \centering
        \includegraphics[width=\textwidth]{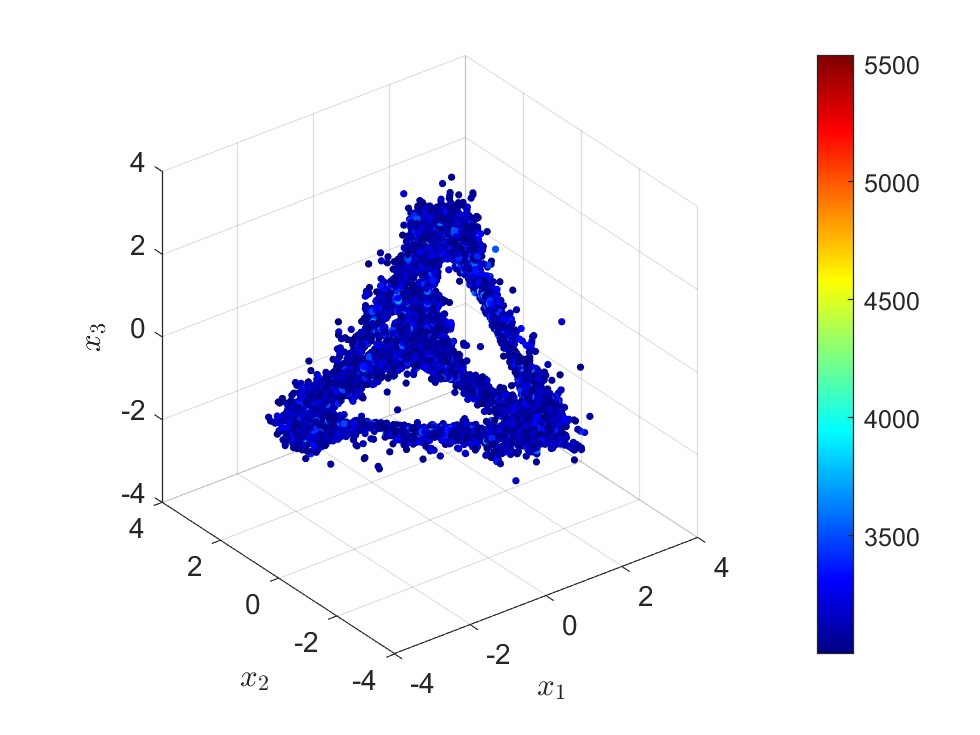}
        \refstepcounter{subfigure}(\alph{subfigure}) 
        \label{fig:2b}
    \end{minipage}
    
    \vspace{\floatsep}

    \caption{Reconstruction of a tetrahedral source.
        (a) Location of the tetrahedral source. 
        (b) Reconstruction with $\epsilon= 5 \%$. 
        }
    \label{fig14}
\end{figure}
\end{example}

\section{Conclusion}
We have proposed a novel indicator function based on the initial arrival time of waves to invert various forms of source terms in this paper. The uniqueness results of the inverse problems have been discussed, and the validity of the indicator function has been theoretically demonstrated. Finally, the effectiveness of the proposed method has been verified through numerical experiments.

\section*{Acknowledgments}
The work of Bo Chen was supported by the National Natural Science Foundation of China (NSFC)
[grant number 11671170]. The work of Peng Gao was supported by the Scientific Research Foundation of Civil Aviation University of
China [grant number 2020KYQD109].






\begin{thebibliography}{99}
\footnotesize


\bibitem{sayas2016} 
Sayas F J. 
\textit{Retarded Potentials and Time Domain Boundary Integral Equations: A Road-Map}. 
Switzerland: Springer Series in Computational Mathematics,  2016.

\bibitem{Chen2017}
Chen B,  Ma F,  Guo Y. 
Time domain scattering and inverse scattering problems in a locally perturbed half-plane. 
\textit{Applicable Analysis},  2017,  96(8):1303--1325.

\bibitem{Chen2021}
Chen B,  Sun Y. 
A simple method of reconstructing a point-like scatterer according to time-dependent acoustic wave propagation. 
\textit{Inverse Problems in Science and Engineering},  2021,  29(12): 1895--1911.


\bibitem{kirsch2011} 
Kirsch A. 
\textit{An Introduction to the Mathematical Theory of Inverse Problems}. 
New York: Springer,  2011.

\bibitem{colton2013} 
Colton D,  Kress R. 
\textit{Inverse Acoustic and Electromagnetic Scattering Theory} (3rd edn). 
Berlin: Springer,  2013.

\bibitem{isakov1990}
Isakov V.
\textit{Inverse Source Problems}. Mathematical Surveys and Monographs vol 34. Providence, RI: American Mathematical Society, 1990.


\bibitem{Sun2017}
Sun Y. 
Indirect boundary integral equation method for the Cauchy problem of the Laplace equation. 
\textit{Journal of Scientific Computing},  2017,  71(2): 469--498.


\bibitem{Bao2021} 
Bao G,  Liu Y,  Triki F. 
Recovering point sources for the inhomogeneous Helmholtz equation. 
\textit{Inverse Problems},  2021,  37(9): 095005.


\bibitem{Zhang2019} 
Zhang D,  Guo Y,  Li J,  et al. Locating multiple multipolar acoustic sources using the direct sampling method. \textit{Communications in Computational Physics},  2019,  25(5): 1328-1356.


\bibitem{alves2009} 
Alves C,  Kress R,  Serranho P. 
Iterative and range test methods for an inverse source problem for acoustic waves. 
\textit{Inverse Problems},  2009,  25(5): 055005.


\bibitem{Badia2011} 
Badia A E,  Nara T. An inverse source problem for Helmholtz's equation from the Cauchy data with a single wave number. \textit{Inverse Problems},  2011,  27(10): 105001.


\bibitem{zhang2018a} 
Zhang D,  Guo Y. 
Uniqueness results on phaseless inverse acoustic scattering with a reference ball. 
\textit{Inverse Problems},  2018,  34(8): 085002.

\bibitem{zhang2018b} 
Zhang D,  Guo Y,  Li J,  et al. 
Retrieval of acoustic sources from multi-frequency phaseless data. 
\textit{Inverse Problems},  2018,  34(9): 094001.


\bibitem{Bao2010} 
Bao G,  Lin J,  Triki F. A multi-frequency inverse source problem. \textit{Journal of Differential Equations},  2010,  249(12): 3443-3465.


\bibitem{Bao2015} 
Bao G,  Lu S,  Rundell W,  et al. A recursive algorithm for multifrequency acoustic inverse source problems. \textit{SIAM Journal on Numerical Analysis},  2015,  53(3): 1608-1628.

\bibitem{Guo2023} 
Guo H,  Hu G,  Ma G. Imaging a moving point source from multifrequency data measured at one and sparse observation directions (part I): far-field case. \textit{SIAM Journal on Imaging Sciences},  2023,  16(3):1535-1571.

\bibitem{Ma2024} 
Ma G,  Guo H,  Hu G. Imaging a moving point source from multi-frequency data measured at one and sparse observation points (part II): near-field case in 3D. \textit{SIAM Journal on Imaging Sciences},  2024,  17(3): 1377-1414.


\bibitem{Guo2024}
Guo H,  Hu G. 
Inverse wave-number-dependent source problems for the Helmholtz equation. 
\textit{SIAM Journal on Numerical Analysis},  2024,  62(3):1372--1393.



\bibitem{Triki2024}
Triki F,  Ljudersteinlein K,  Karamelmedovig M. Fourier method for inverse source problem using correlation of passive measurements. \textit{Inverse Problems}, 2024, 40(10): 105009.

\bibitem{Hoop2015}
Hoop M V D,  Tittelfitz J. An inverse source problem for a variable speed wave equation with discrete-in-time sources. \textit{Inverse Problems},  2015,  31(7): 075007.

\bibitem{Ton2003}
Ton B A. An inverse source problem for the wave equation. \textit{Nonlinear Analysis: Theory,  Methods \& Applications},  2003,  55(3): 269-284.

\bibitem{Cheng2002}
Cheng J,  Ding G,  Yamamoto M. Uniqueness along a line for an inverse wave source problem. \textit{Communications in Partial Differential Equations},  2002,  27(9-10): 2055-2069.

\bibitem{Cheng2005}
Cheng J,  Peng L,  Yamamoto M. The conditional stability in line unique continuation for a wave equation and an inverse wave source problem. \textit{Inverse Problems},  2005,  21(6): 1993-2007.

\bibitem{Bao2018}
Bao G,  Hu G,  Kian Y,  et al. Inverse source problems in elastodynamics. \textit{Inverse Problems},  2018,  34(4): 045009.


\bibitem{Wang2017} 
Wang X,  Guo Y,  Li J,  et al. Mathematical design of a novel input/instruction device using a moving acoustic emitter. \textit{Inverse Problems},  2017,  33(10): 105009.

\bibitem{Chen2020}
Chen B,  Guo Y,  Ma F,  et al. Numerical schemes to reconstruct three-dimensional time-dependent point sources of acoustic waves. \textit{Inverse Problems},  2020,  36(7): 075009.


\bibitem{Garnier2015} 
Garnier J,  Fink M. Super-resolution in time-reversal focusing on a moving source. \textit{Wave Motion},  2015,  53: 80-93.


\bibitem{AlJebawy2022} 
Al Jebawy H,  Elbadia A,  Triki F. Inverse moving point source problem for the wave equation. \textit{Inverse Problems},  2022,  38(12): 125003.


\bibitem{Li2018} 
Li J,  Liu H,  Sun H. On a gesture-computing technique using electromagnetic waves. \textit{Inverse Problems and Imaging},  2018,  12(3): 677-696.

\bibitem{Chang2022} 
Chang Y,  Guo Y. Simultaneous recovery of an obstacle and its excitation sources from near-field scattering data. \textit{Electronic Research Archive},  2022,  30(4): 1296-1321.



\bibitem{Ji2021}
Ji X. Reconstruction of sources from time domain scattered waves at sparse sensors. \textit{Inverse Problems},  2021,  37(6): 065010.

\bibitem{Wang2024}
Wang J,  Chen B,  Yu Q,  et al. A novel sampling method for time domain acoustic inverse source problems. \textit{Physica Scripta},  2024,  99(3): 035221.













\end{thebibliography}


\end{document}